\newtheoremstyle{mytheoremstyle} 
    {\topsep}                    
    {\topsep}                    
    {}                   
    {}                           
    {\itshape}                   
    {.}                          
    {.5em}                       
    {}  
\theoremstyle{mytheoremstyle}
\newtheorem{lemma}{Lemma}
\newtheorem{definition}{Definition}
\newtheorem{corollary}{Corollary}
\newtheorem{theorem}{Theorem}
\newtheorem{postulate}{Postulate}
\tikzstyle{none}=[]
\tikzstyle{trace}=[circuit ee IEC,thick,ground,rotate=0,scale=2]
\begin{document}
\widetext

\title{Entanglement is an inevitable feature of any non-classical theory}

\author{Jonathan G. Richens\footnotemark[1]}
\affiliation{Controlled Quantum Dynamics theory group, Department of Physics, Imperial College London, London SW7 2AZ, UK.}
\affiliation{Department of Physics and Astronomy, University College London,
Gower Street, London WC1E 6BT, UK.}
\thanks{These authors contributed equally to this work}

\author{John H. Selby\footnotemark[1]}
\affiliation{Controlled Quantum Dynamics theory group, Department of Physics, Imperial College London, London SW7 2AZ, UK.}
\affiliation{University of Oxford, Department of Computer Science, Oxford OX1 3QD, UK.}

\author{Sabri W. Al-Safi}
\affiliation{School of Science \& Technology, Nottingham Trent University, Burton Street, Nottingham NG1 4BU, UK}

\begin{abstract}
One of the most striking features of quantum theory is the existence of entangled states, responsible for Einstein's so called ``spooky action at a distance''. These states emerge from the mathematical formalism of quantum theory, but to date  we do not have a clear idea of which physical principles give rise to entanglement. 
Why does quantum theory have entangled states? Would any theory superseding classical theory have entangled states, or is quantum theory special? We demonstrate that without entanglement, non-classical degrees of freedom cannot reversibly interact. We present two postulates, no-cloning / no-broadcasting and local transitivity, either of which are sufficient to imply the existence of entangled states in any non-classical theory with reversible interactions. Therefore we argue that entanglement is an inevitable feature of a non-classical universe.
\end{abstract}

\maketitle

\section{Introduction}

Entanglement and non-locality are two of the features of quantum theory that clash most strongly with our classical preconceptions as to how the universe works. In particular, they create a tension with the other major theory of the twentieth century: relativity~\cite{einstein1916foundation}. This is most clearly illustrated by Bell's theorem~\cite{bell1966physics,wood2015lesson}, in which certain entangled states are shown to violate local realism. 
In this paper we ask whether entanglement is a surprising feature of nature, or whether it should be expected in any non-classical theory? Could a scientist with no knowledge of quantum theory have predicted the existence of entangled states based solely on the premise that their classical understanding of the world was incomplete? 

To answer these questions we explore how the dynamics of theories are contingent on the existence of entangled states. Specifically, we focus on the existence of reversible interactions, a key feature of both classical and quantum theory. We find that reversible interactions between non-classical degrees of freedom must generate entangled states. From this we then propose a set of physical postulates that imply the existence of entangled states in any probabilistic theory obeying them. Remarkably, aside from the requirement that systems can reversibly interact, these postulates concern only the local properties of the individual systems.

The outline of this paper is as follows. In the following section we discuss the framework, and explain what we mean by entanglement, reversible interactions and classical degrees of freedom within this framework. In the results section we present a novel approach to answering these questions based on the diagrammatic approach to generalised probabilistic theories~\cite{hardy2011reformulating,chiribella2011informational,coecke2017picturing}. Finally in the discussion section we interpret our results and discuss our proposed physical postulates in-depth.

\section{Setup - Generalized probabilistic theories}

In this paper, we will work in the generalised probabilistic theory framework. This framework is broad enough to allow one to discuss essentially arbitrary operationally defined theories, and is based on the idea that any physical theory should be able to predict the outcomes of experiments. We work in the diagrammatic based framework of \cite{hardy2011reformulating,chiribella2011informational} which defines such a theory in terms of a collection of \emph{processes}. These processes correspond to obtaining a particular outcome in a single use of a piece of laboratory apparatus, where the apparatus may have input and output ports for particular types of physical \emph{systems} and a classical pointer to indicate which outcome has occurred.

The above can be formulated via a diagrammatic notation, in which processes are labeled boxes and systems are labeled wires connecting them.  Some examples are shown in the table below, with the quantum-theoretic analogue of each process for comparison.

\begin{center}
\begin{tabular}{c|c|c|c}
& Process & State & Effect \\ \hline
\begin{minipage}{1.4cm}\centering\vspace{-1cm}Diagram\vspace{0.3cm}\end{minipage} & $\begin{tikzpicture}[scale=0.5]
	\begin{pgfonlayer}{nodelayer}
		\node [style=none] (0) at (1, 1.25) {};
		\node [style=none] (1) at (1, -1.25) {};
		\node [style=none] (2) at (2.5, -1.25) {};
		\node [style=none] (3) at (2.5, 1.25) {};
		\node [style=none] (4) at (1.750001, -0) {$f$};
		\node [style=none] (5) at (0, -0) {};
		\node [style=none] (6) at (1, -0) {};
		\node [style=none] (7) at (2.5, 0.5000001) {};
		\node [style=none] (8) at (3.5, 0.5000001) {};
		\node [style=none] (9) at (2.5, -0.5000001) {};
		\node [style=none] (10) at (3.5, -0.5000001) {};
		\node [style=none] (11) at (0.2499996, -0.5000001) {$A$};
		\node [style=none] (12) at (3.25, -1) {$B$};
		\node [style=none] (13) at (3.25, 1) {$A$};
		\node [style=none] (14) at (1.750001, 1.5) {};
	\end{pgfonlayer}
	\begin{pgfonlayer}{edgelayer}
		\draw (5.center) to (6.center);
		\draw (0.center) to (1.center);
		\draw (1.center) to (2.center);
		\draw (2.center) to (3.center);
		\draw (3.center) to (0.center);
		\draw (7.center) to (8.center);
		\draw (9.center) to (10.center);
	\end{pgfonlayer}
\end{tikzpicture}$ & $\begin{tikzpicture}[scale=0.5]
	\begin{pgfonlayer}{nodelayer}
		\node [style=none] (0) at (0, 1.25) {};
		\node [style=none] (1) at (0, -0.25) {};
		\node [style=none] (2) at (0, 0.5) {};
		\node [style=none] (3) at (1, 0.5) {};
		\node [style=none] (4) at (-0.5, 0.5) {$s$};
		\node [style=none] (5) at (0.75, 1) {$A$};
		\node [style=none] (6) at (0, 1.75) {};
		\node [style=none] (7) at (0, -0.7500002) {};
	\end{pgfonlayer}
	\begin{pgfonlayer}{edgelayer}
		\draw [bend right=90, looseness=2.25] (0.center) to (1.center);
		\draw (0.center) to (1.center);
		\draw (2.center) to (3.center);
	\end{pgfonlayer}
\end{tikzpicture}$ & $\begin{tikzpicture}[scale=0.5]
		\node  (0) at (0, 0.5) {};
		\node  (1) at (1, 0.5) {};
		\node  (2) at (0, -0.5) {};
		\node  (3) at (1, -0.5) {};
		\node  (4) at (1, -1) {};
		\node  (5) at (1, 1) {};
		\node  (6) at (0.25, 1) {$A$};
		\node  (7) at (0.25, -1) {$B$};
		\node  (8) at (1.5, -0) {$e$};
		\draw [bend left=90, looseness=1.75] (5.center) to (4.center);
		\draw (5.center) to (4.center);
		\draw (2.center) to (3.center);
		\draw (0.center) to (1.center);
\end{tikzpicture}$ \\ \hline
QT & CP map & \begin{minipage}{1.4cm}\centering\vspace{0.2cm}Density matrix\end{minipage} & \begin{minipage}{1.4cm}\centering\vspace{0.2cm}POVM element\end{minipage}
\end{tabular}
\end{center}

These processes can be connected together acyclically ( i.e., the  output port of one  process may feed into the input port of another  but no loops are allowed) ensuring that systems match, to form experiments. If an experiment has no disconnected ports then we must be able to associate a probability to obtaining any possible set of outcomes, for example:

\[\begin{tikzpicture}[scale=0.5]
	\begin{pgfonlayer}{nodelayer}
		\node [style=none] (0) at (-0.2499996, 2.25) {};
		\node [style=none] (1) at (-0.2499996, -1.25) {};
		\node [style=none] (2) at (1.25, -1.25) {};
		\node [style=none] (3) at (1.25, 2.25) {};
		\node [style=none] (4) at (0.5, 0.5) {$f$};
		\node [style=none] (5) at (-1.25, 0.5000001) {};
		\node [style=none] (6) at (-0.2499996, 0.5000001) {};
		\node [style=none] (7) at (1.25, 1.5) {};
		\node [style=none] (8) at (2.25, 1.5) {};
		\node [style=none] (9) at (1.25, -0.5000001) {};
		\node [style=none] (10) at (2.25, -0.5000001) {};
		\node [style=none] (11) at (-0.7499997, -0) {$A$};
		\node [style=none] (12) at (1.750001, -1) {$B$};
		\node [style=none] (13) at (1.750001, 2) {$C$};
		\node [style=none] (14) at (-1.25, 3) {};
		\node [style=none] (15) at (2.25, 3) {};
		\node [style=none] (16) at (2.25, 3.75) {};
		\node [style=none] (17) at (2.25, 0.7500002) {};
		\node [style=none] (18) at (-1.25, 3.75) {};
		\node [style=none] (19) at (-1.25, -0.2500001) {};
		\node [style=none] (20) at (2.25, 0.2500001) {};
		\node [style=none] (21) at (2.25, -1.25) {};
		\node [style=none] (22) at (0.5, 3.5) {$A$};
		\node [style=none] (23) at (-1.75, 1.75) {$s$};
		\node [style=none] (24) at (2.749999, 2.25) {$e_1$};
		\node [style=none] (25) at (2.749999, -0.5000001) {$e_2$};
		\node [style=none] (26) at (6, 1.5) {$\sim\ p(s,f,e_1,e_2)$};
	\end{pgfonlayer}
	\begin{pgfonlayer}{edgelayer}
		\draw (5.center) to (6.center);
		\draw (0.center) to (1.center);
		\draw (1.center) to (2.center);
		\draw (2.center) to (3.center);
		\draw (3.center) to (0.center);
		\draw (7.center) to (8.center);
		\draw (9.center) to (10.center);
		\draw [bend right=90, looseness=1.00] (18.center) to (19.center);
		\draw (18.center) to (19.center);
		\draw (14.center) to (15.center);
		\draw [bend left=90, looseness=1.25] (16.center) to (17.center);
		\draw (16.center) to (17.center);
		\draw [bend left=90, looseness=2.00] (20.center) to (21.center);
		\draw (20.center) to (21.center);
	\end{pgfonlayer}
\end{tikzpicture}\]

However, if there are disconnected ports then in general this will not be possible, as which outcome occurs could depend on the inputs and outputs of the apparatus.

Given some standard operational assumptions \cite{chiribella2010probabilistic,chiribella2016quantum} we can associate each system to a compact, convex subset of a finite-dimensional real vector space, referred to as the \emph{state space} of the system. For example the Bloch ball is the state space for a qubit, and $d$-dimensional simplices represent the state spaces of $(d+1)$-level classical systems. States then correspond to vectors within this set, and pure states correspond to extremal points of the set. General  transformations are linear maps between the vector spaces containing the convex sets. Effects correspond to positive linear functionals on these sets. There is a unique deterministic effect $u$ that returns $1$ for all vectors in the convex set. Applying this effect to a subsystem of a composite system is analogous to applying the partial trace in quantum theory. This is diagrammatically represented as,
\[\begin{tikzpicture}[scale=0.5]
	\begin{pgfonlayer}{nodelayer}
		\node [style=none] (0) at (0, -0) {};
		\node [style=trace] (1) at (0.7499999, -0) {};
	\end{pgfonlayer}
	\begin{pgfonlayer}{edgelayer}
		\draw (0.center) to (1);
	\end{pgfonlayer}
\end{tikzpicture}\]

The existence of such an effect is often taken as part of the basic framework, but can also be shown to be a consequence of the \emph{causality} axiom used in \cite{chiribella2011informational}.

\begin{definition}\label{def:Entanglement} Entanglement:
A pure state $s$ is entangled when
\[ \begin{tikzpicture}[scale=0.5]
		\node  (0) at (-0.75, -1) {};
		\node  (1) at (0, -1) {};
		\node  (2) at (-0.75, 1) {};
		\node  (3) at (0, 1) {};
		\node  (4) at (-0.75, 1.5) {};
		\node  (5) at (-0.75, -1.5) {};
		\node  (6) at (-1.25, -0) {$s$};
		\node  (7) at (1, -0) {$=$};
		\node  (8) at (3, 1.75) {};
		\node  (9) at (3, 1) {};
		\node  (10) at (3, 0.25) {};
		\node  (11) at (3, -0.25) {};
		\node  (12) at (3, -1) {};
		\node  (13) at (3, -1.75) {};
		\node  (14) at (3.75, 1) {};
		\node  (15) at (3.75, -1) {};
		\node  (16) at (2.5, 1) {$s_1$};
		\node  (17) at (2.5, -1) {$s_2$};
        \node (18) at (-5,0){$\nexists s_1,s_2 \text{ such that }$};
		\draw (0.center) to (1.center);
		\draw [bend right=90, looseness=1.25] (4.center) to (5.center);
		\draw (4.center) to (5.center);
		\draw (2.center) to (3.center);
		\draw [bend right=90, looseness=2.25] (8.center) to (10.center);
		\draw [bend right=90, looseness=2.25] (11.center) to (13.center);
		\draw (12.center) to (15.center);
		\draw (9.center) to (14.center);
		\draw (8.center) to (10.center);
		\draw (11.center) to (13.center);
\end{tikzpicture} \]
\end{definition}

We can therefore define a composite of two systems such that there are no entangled states. This is known as the minimal tensor product (denoted $\boxtimes$) and is defined as the convex hull of all pure product states.\\

\begin{definition}\label{def:RI} Reversible interactions:
A processes $T$ is reversible if there exists $T^{-1}$ where $T^{-1}\circ T = \mathds{1} = T\circ T^{-1}$. It is a non-trivial interaction if,
\[ \begin{tikzpicture}[scale=0.5]
	\begin{pgfonlayer}{nodelayer}
		\node [style=none] (0) at (-0.75, -0) {};
		\node [style=none] (1) at (0, -0) {};
		\node [style=none] (2) at (-0.75, 2) {};
		\node [style=none] (3) at (0, 2) {};
		\node [style=none] (4) at (-0.75, -0.5) {};
		\node [style=none] (5) at (-1.5, 1) {$T$};
		\node [style=none] (6) at (1, 1) {$=$};
		\node [style=none] (7) at (4.25, 2) {};
		\node [style=none] (8) at (4.25, 1.25) {};
		\node [style=none] (9) at (4.25, -0) {};
		\node [style=none] (10) at (4.25, -0.75) {};
		\node [style=none] (11) at (5, 2) {};
		\node [style=none] (12) at (5, -0) {};
		\node [style=none] (13) at (3.5, 2) {$T_1$};
		\node [style=none] (14) at (3.5, -0) {$T_2$};
		\node [style=none] (15) at (-0.75, 2.5) {};
		\node [style=none] (16) at (-2.25, 2.5) {};
		\node [style=none] (17) at (-2.25, 2) {};
		\node [style=none] (18) at (-3, 2) {};
		\node [style=none] (19) at (-3, -0) {};
		\node [style=none] (20) at (-2.25, -0) {};
		\node [style=none] (21) at (-2.25, -0.5) {};
		\node [style=none] (22) at (4.25, 0.75) {};
		\node [style=none] (23) at (2.75, 0.75) {};
		\node [style=none] (24) at (2.75, -0.75) {};
		\node [style=none] (25) at (2.75, 1.25) {};
		\node [style=none] (26) at (2.75, 2.75) {};
		\node [style=none] (27) at (4.25, 2.75) {};
		\node [style=none] (28) at (2.75, 2) {};
		\node [style=none] (29) at (2, 2) {};
		\node [style=none] (30) at (2, -0) {};
		\node [style=none] (31) at (2.75, -0) {};
		\node [style=none] (32) at (-7, 1) {$\nexists T_1\ \&\ T_2 \text{ such that }$};
	\end{pgfonlayer}
	\begin{pgfonlayer}{edgelayer}
		\draw (0.center) to (1.center);
		\draw (2.center) to (3.center);
		\draw (9.center) to (12.center);
		\draw (7.center) to (11.center);
		\draw (18.center) to (17.center);
		\draw (16.center) to (21.center);
		\draw (19.center) to (20.center);
		\draw (21.center) to (4.center);
		\draw (4.center) to (15.center);
		\draw (15.center) to (16.center);
		\draw (26.center) to (25.center);
		\draw (25.center) to (8.center);
		\draw (8.center) to (27.center);
		\draw (27.center) to (26.center);
		\draw (29.center) to (28.center);
		\draw (30.center) to (31.center);
		\draw (23.center) to (24.center);
		\draw (24.center) to (10.center);
		\draw (10.center) to (22.center);
		\draw (22.center) to (23.center);
	\end{pgfonlayer}
\end{tikzpicture}\]
\end{definition}
Note that some transformations, such as the swap, satisfy this definition of being a reversible interaction, but this could equally well just be a relabeling of our systems and so no genuine correlations are generated between them. As such, in the results section we introduce a refined notion of what it means for a transformation to be a reversible interaction.\\

\begin{definition}\label{def:classicality} Classical degrees of freedom:
a state space $\mathcal{S}$ has a classical degree of freedom iff
\[\exists \, \mathcal{A}\,  \& \, \mathcal{B} \text{ such that } \mathcal{S}= \mathcal{A}\oplus\mathcal{B}\]
where $\mathcal{A}$ and $\mathcal{B}$ are state spaces and $\oplus$ denotes their direct sum (see appendix). Any state space $\mathcal{S}$ with this property is said to be decomposable.

\end{definition}
At first glance this seems like an odd definition of classicality. However, we can interpret $\mathcal{S}=\mathcal{A}\oplus\mathcal{B}$ as a state space where either a state from $\mathcal{A}$ or a state from $\mathcal{B}$ is prepared and there is a classical label as to which of these it belongs to. This classical label is the degree of freedom that is referred to in the above definition. In quantum theory this corresponds to having a state space with super-selection rules \cite{bartlett2007reference} giving a decomposition of the density matrices into block diagonal form.

This definition of classicality becomes particularly clear when we impose the following physical postulate on $\mathcal{S}$ relating the dynamics and kinematics of the theory. 

\begin{postulate}
Transitivity: For any pure states $x,y$ there exists a reversible transformation $T$ s.t. $T(x)=y$
\end{postulate}

We do not assume transitivity in the derivation of our results but employ it later as one of the postulates that is sufficient to derive the existence of entangled states. The usefulness of this Postulate is contained in the following result: if the postulate is satisfied then our above definition of a classical degree of freedom explicitly corresponds to a classical subsystem.\\

\begin{theorem} Transitive state spaces with a classical degree of freedom have a classical subsystem, 
\[\mathcal{S}=\mathcal{A}\oplus \mathcal{B} \implies  \mathcal{S}=\Delta_N\boxtimes \mathcal{C}\] where $\Delta_N$ is some $N+1$-level classical system and $\mathcal{C}$ some state space.
\end{theorem}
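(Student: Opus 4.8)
The plan is to pass to a maximal decomposition of $\mathcal{S}$ into indecomposable pieces, show that reversible transformations can do no more than permute these pieces, and then use the Transitivity postulate to force all the pieces to be mutually isomorphic. Since $\mathcal{S}$ is finite-dimensional and each summand of a direct sum has strictly smaller dimension, repeatedly applying Definition~\ref{def:classicality} and inducting on dimension produces a decomposition $\mathcal{S}=\mathcal{C}_0\oplus\mathcal{C}_1\oplus\dots\oplus\mathcal{C}_N$ in which every $\mathcal{C}_i$ is indecomposable (possibly a single point). Using the identification of an $(N{+}1)$-fold direct sum of a fixed state space with its minimal tensor product with the $N$-simplex (appendix), it then suffices to prove that all the $\mathcal{C}_i$ are isomorphic: in that case $\mathcal{S}=\bigoplus_{i=0}^{N}\mathcal{C}_0=\Delta_N\boxtimes\mathcal{C}_0$.

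The heart of the argument is the claim that any reversible $T$ permutes the blocks, i.e.\ $T(\mathcal{C}_i)=\mathcal{C}_{\pi(i)}$ for some permutation $\pi$ of $\{0,\dots,N\}$. I would establish this through three elementary facts about direct sums of cones. (a) Every face of $\mathcal{S}$ has the form $\bigoplus_i F_i$ with $F_i$ a face of $\mathcal{C}_i$, because a face of a cone contains both summands whenever it contains a sum of two cone elements. (b) Consequently a face $F$ of $\mathcal{S}$ is a direct summand, $\mathcal{S}=F\oplus F'$, exactly when $F=\bigoplus_{i\in I}\mathcal{C}_i$ for some $I\subseteq\{0,\dots,N\}$: writing $F=\bigoplus_i F_i$ and $F'=\bigoplus_i F_i'$, comparing the direct-sum decompositions $\mathcal{S}=F\oplus F'$ and $\mathcal{S}=\bigoplus_i\mathcal{C}_i$ coordinatewise forces $\mathcal{C}_i=F_i\oplus F_i'$ for each $i$, and indecomposability leaves only $F_i=\{0\}$ or $F_i'=\{0\}$; the minimal nonzero such ``block faces'' are then precisely the $\mathcal{C}_i$. (c) A reversible $T$ extends to a linear automorphism of the ambient vector space carrying the cone onto itself, so it sends any direct-sum decomposition of $\mathcal{S}$ to another one, hence maps block faces bijectively to block faces and preserves inclusion. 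Combining (a)--(c), $T$ is an automorphism of the finite lattice of block faces and therefore permutes its atoms; being injective and affine, $T$ restricts on each atom to a state-space isomorphism $\mathcal{C}_i\to\mathcal{C}_{\pi(i)}$.

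To finish, recall that the pure states of $\mathcal{S}=\bigoplus_i\mathcal{C}_i$ are exactly the pure states of the individual $\mathcal{C}_i$, so each pure state lies in a unique block. Given indices $i$ and $j$, choose a pure state $x\in\mathcal{C}_i$ and a pure state $y\in\mathcal{C}_j$; by Transitivity there is a reversible $T$ with $T(x)=y$, and since $T(x)\in T(\mathcal{C}_i)=\mathcal{C}_{\pi(i)}$ while $y$ lies in the single block $\mathcal{C}_j$, we conclude $\pi(i)=j$ and hence $\mathcal{C}_i\cong\mathcal{C}_j$ via $T$. Thus all the $\mathcal{C}_i$ are isomorphic to a common state space $\mathcal{C}$, and $\mathcal{S}=\Delta_N\boxtimes\mathcal{C}$.

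I expect the main obstacle to be making step~(b) fully rigorous within whatever convention the appendix fixes for $\oplus$, that is, checking that the genuine direct summands of $\mathcal{S}$ are exactly the ``coordinate'' sub-sums of the $\mathcal{C}_i$ and that no exotic face of $\mathcal{S}$ can serve as a complement to another. Once this convex-geometric bookkeeping is in place, the equivariance of reversible maps under the face lattice and the closing appeal to Transitivity are routine.
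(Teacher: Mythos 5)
Your proposal is correct and follows essentially the same route as the paper's proof: decompose $\mathcal{S}$ into indecomposable summands, use the fact that faces of a direct sum split coordinatewise together with the face-lattice automorphism induced by a reversible map to show that reversible transformations permute the summands, invoke Transitivity to force all summands to be isomorphic, and finally identify $\bigoplus_{i=0}^{N}\mathcal{C}$ with $\Delta_N\boxtimes\mathcal{C}$ via distributivity. The only cosmetic difference is that you phrase the permutation step in terms of atoms of the lattice of direct summands, whereas the paper works with the maximal irreducible face containing a given vertex; both rest on the same cited characterization of faces of direct sums.
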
\label{decomposable transitive}
\proof
See Appendix
\endproof

However, even in cases where we do not assume transitivity it is clear that there is still a classical degree of freedom in the state space, even if it can not be written as an independent subsystem. For example, it has been shown that decomposable state spaces allow for non-disturbing measurements, giving rise to classical observables \cite{barrett2007information}. As such all systems with a classical degree of freedom violate a generalized no-cloning / no-broadcasting postulate on some observables \cite{barnum2007generalized}. Therefore for this work we also propose the following physical postulate to rule out state spaces with a classical degree of freedom. 

\begin{postulate}
No cloneable information: A state space has no cloneable information if it does not permit a non-trivial non-disturbing measurement. 
\end{postulate}

 This postulate can be viewed as a generalization of the no-cloning theorem of quantum theory \cite{wootters1982single}, which states that an unknown pure state cannot be cloned. No cloneable information generalizes this to sets of pure states, whereby it is impossible to copy information as to which subspace of the state space the system is in. This reduces to the no-cloning theorem when the subspace is question is the subspace generate by a single pure state. Furthermore it implies an analogous generalized no-broadcasting theorem \cite{barnum1996noncommuting}.  Similar postulates such as \emph{information gain implies disturbance} \cite{pfister2013information} are also sufficient to discount classical degrees of freedom.

\section{Results} 

If we want to determine the reversible dynamics of theories without entanglement, first we must be careful to define what we mean by an interaction. For example, consider the state spaces $\mathcal{A}$ and $\mathcal{B}:=\mathcal{A}\boxtimes\mathcal{A}$. Then we have the composite state space, $\mathcal{S}:=\mathcal{A}\boxtimes \mathcal{B}=\mathcal{A}\boxtimes(\mathcal{A}\boxtimes\mathcal{A})$. Then, $T:=\text{SWAP}_{A,A}\otimes \mathds{1}_A$ is a valid reversible transformation on $\mathcal{S}$ but does not factorize as $T= t_A\otimes t_B$ where $t_A:\mathcal{A}\to\mathcal{A}$ and $t_B:\mathcal{B}\to\mathcal{B}$ and so by definition \ref{def:RI} is a non-trivial reversible interaction. However, as this is really just swapping subsystems it does not generate any correlations between the systems and so we do not consider this to be a `genuine' interaction.

To eliminate such examples we limit ourselves to considering \emph{locally reversible interactions} as defined below.

\begin{definition}\label{def:LRI}A locally reversible interaction $T$ is one that satisfies the following:
\[\begin{tikzpicture}[scale=0.5]
	\begin{pgfonlayer}{nodelayer}
		\node [style=none] (0) at (0, 2) {};
		\node [style=none] (1) at (0, 2.75) {};
		\node [style=none] (2) at (0, 1.25) {};
		\node [style=none] (3) at (0, 0.75) {};
		\node [style=none] (4) at (0, -0) {};
		\node [style=none] (5) at (0, -0.75) {};
		\node [style=none] (6) at (1, -0) {};
		\node [style=none] (7) at (1, 2) {};
		\node [style=none] (8) at (1, 2.75) {};
		\node [style=none] (9) at (1, -1) {};
		\node [style=none] (10) at (2.5, -1) {};
		\node [style=none] (11) at (2.5, 2.75) {};
		\node [style=none] (12) at (2.5, 2) {};
		\node [style=none] (13) at (2.5, -0) {};
		\node [style=none] (14) at (3.5, 2) {};
		\node [style=none] (15) at (3.5, -0) {};
		\node [style=none] (16) at (5, 1) {$=$};
		\node [style=none] (17) at (-0.5, 2) {$a$};
		\node [style=none] (18) at (-0.5, -0) {$b$};
		\node [style=none] (19) at (1.75, 1) {$T$};
		\node [style=none] (20) at (11, 2) {};
		\node [style=none] (21) at (8.5, -0) {};
		\node [style=none] (22) at (7.5, 1.25) {};
		\node [style=none] (23) at (7.5, -0.75) {};
		\node [style=none] (24) at (7.5, 2) {};
		\node [style=none] (25) at (8.5, 2.75) {};
		\node [style=none] (26) at (7, 2) {$a$};
		\node [style=none] (27) at (10, -0) {};
		\node [style=none] (28) at (8.5, 2) {};
		\node [style=none] (29) at (10, 2) {};
		\node [style=none] (30) at (10, 2.75) {};
		\node [style=none] (31) at (7.5, -0) {};
		\node [style=none] (32) at (11, -0) {};
		\node [style=none] (33) at (7.5, 2.75) {};
		\node [style=none] (34) at (7, -0) {$b$};
		\node [style=none] (35) at (7.5, 0.75) {};
		\node [style=none] (36) at (0, 4) {$\forall a,\ b\quad \exists X_b,\ Y_a \text{ such that }$};
		\node [style=none] (37) at (8.5, 1.25) {};
		\node [style=none] (38) at (10, 1.25) {};
		\node [style=none] (39) at (10, 0.75) {};
		\node [style=none] (40) at (8.5, 0.75) {};
		\node [style=none] (41) at (8.5, -0.75) {};
		\node [style=none] (42) at (10, -0.75) {};
		\node [style=none] (43) at (9.25, 2) {$X_b$};
		\node [style=none] (44) at (9.25, -0) {$Y_a$};
	\end{pgfonlayer}
	\begin{pgfonlayer}{edgelayer}
		\draw (8.center) to (9.center);
		\draw (9.center) to (10.center);
		\draw (10.center) to (11.center);
		\draw (11.center) to (8.center);
		\draw (12.center) to (14.center);
		\draw (13.center) to (15.center);
		\draw (4.center) to (6.center);
		\draw [bend right=90, looseness=2.25] (3.center) to (5.center);
		\draw (3.center) to (5.center);
		\draw [bend right=90, looseness=2.25] (1.center) to (2.center);
		\draw (1.center) to (2.center);
		\draw (0.center) to (7.center);
		\draw (30.center) to (25.center);
		\draw (29.center) to (20.center);
		\draw (27.center) to (32.center);
		\draw (31.center) to (21.center);
		\draw [bend right=90, looseness=2.25] (35.center) to (23.center);
		\draw (35.center) to (23.center);
		\draw [bend right=90, looseness=2.25] (33.center) to (22.center);
		\draw (33.center) to (22.center);
		\draw (24.center) to (28.center);
		\draw (25.center) to (37.center);
		\draw (37.center) to (38.center);
		\draw (38.center) to (30.center);
		\draw (40.center) to (39.center);
		\draw (39.center) to (42.center);
		\draw (42.center) to (41.center);
		\draw (41.center) to (40.center);
	\end{pgfonlayer}
\end{tikzpicture}\]
where $T$, $X_b$ and $Y_a$ are reversible transformations.
\end{definition}
Such a transformation is a \emph{trivial} interaction if $\forall b\ \ X_b=X$ and $\forall a\ \ Y_a =Y$, for some reversible $X$ and $Y$. Trivial interactions generate no correlations. Examples of locally reversible interactions are the classical computational gates, or a quantum CNOT gate acting on the computational basis states. 

Note that as we are considering the minimal tensor product then this suffices to define the transformation $T$ as the theory must be tomographically local \cite{barnum2007generalized}. Moreover, if we relax the condition that $X_b$ and $Y_a$ are reversible then any bipartite transformation can be written in this way -- it is this reversibility constraint that rules out cases such as the above `partial swap' example.

We will now derive some consequences of the existence of these interactions, which will lead on to a proof of our main result for this section: the existence of a non-trivial locally reversible interaction implies classicality of the state space (in the sense of definition \ref{def:classicality}).

\begin{lemma} Local reversible interactions allow for the construction of `partial broadcasters' \cite{selby2016leaks}.
\end{lemma}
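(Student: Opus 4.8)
The plan is to reverse-engineer the defining diagram of a locally reversible interaction $T$ into a broadcasting-type map by appropriately closing off one of the output wires with a state and feeding in a fixed state on one input wire. Concretely, I would start from Definition~\ref{def:LRI}, which says that for every pair of local states $a,b$ there exist reversible $X_b$ and $Y_a$ such that $T(a\otimes b) = X_b(a)\otimes Y_a(b)$. The key observation is that this is exactly the structure of a \emph{leak} / partial broadcaster in the sense of~\cite{selby2016leaks}: the output on the first wire depends on the input $a$ only up to a reversible correction controlled by $b$, and conversely. So the first step is to recall the precise definition of a partial broadcaster from that reference and identify which piece of the $T$ diagram plays the role of the ``broadcasting channel'' and which plays the role of the environment/ancilla.

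Next I would fix a reference state $b_0$ on the $B$ input and consider the channel $\Lambda_{b_0} : \mathcal{A}\to\mathcal{A}\boxtimes\mathcal{B}$ obtained by preparing $b_0$ and applying $T$, i.e. $\Lambda_{b_0}(a) = T(a\otimes b_0) = X_{b_0}(a)\otimes Y_a(b_0)$. Discarding the second output with the deterministic effect $u$ recovers $X_{b_0}(a)$, which is just a fixed reversible image of $a$; composing with $X_{b_0}^{-1}$ on that wire gives back $a$ exactly. Hence $\Lambda_{b_0}$ followed by $(X_{b_0}^{-1}\otimes \mathds{1})$ is a channel that, on the first output, perfectly reproduces the input state $a$ — this is precisely the ``partial broadcaster'' property: one copy of the relevant information is preserved intact, while the other wire carries at most a state-dependent residue $Y_a(b_0)$. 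I would write this out as a short diagrammatic identity: deforming the $T$-box using the Definition~\ref{def:LRI} decomposition, sliding $X_{b_0}$ and $X_{b_0}^{-1}$ together to cancel, and reading off the resulting diagram as a broadcast map with respect to the set of states $\{a\}$ (or the corresponding observable).

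The main obstacle I anticipate is \emph{not} the algebra but getting the definitions to line up: ``partial broadcaster'' in~\cite{selby2016leaks} is presumably stated in terms of a broadcasting map for a distinguished set of states or a non-disturbing measurement, and one must check that the residual wire $Y_a(b_0)$ does not secretly destroy the broadcasting property — e.g. that the dependence of $Y_a$ on $a$ is weak enough (or can be post-processed away by the reversibility of $Y_a$) that the construction still counts as a genuine partial broadcaster rather than a trivial one. In particular one wants the interaction to be non-trivial (so that $X_b$ genuinely depends on $b$, equivalently $Y_a$ genuinely depends on $a$), which is what prevents the broadcaster from collapsing to the uninteresting case $T = X\otimes Y$. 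So the careful step is to argue that non-triviality of the locally reversible interaction $\iff$ non-triviality of the induced partial broadcaster, and that it broadcasts a non-classical piece of $\mathcal{A}$; this is the hook that lets the next lemma push through to a non-disturbing measurement and hence to classicality of $\mathcal{S}$ in the sense of Definition~\ref{def:classicality}.
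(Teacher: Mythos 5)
Your construction is exactly the paper's: fix a state $b_0$ on the second input, apply $T$, compose with $X_{b_0}^{-1}$ on the first output so that $a\mapsto a\otimes Y_a(b_0)$, and observe that discarding the second wire with $u$ returns $a$, which is precisely the defining equation of a partial broadcaster (the paper calls this map $B_b$ and also records the symmetric construction $B'_a$). Your closing remarks about non-triviality correctly anticipate the paper's observation, made just after the proof, that trivial locally reversible interactions yield only trivial partial broadcasters.
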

\proof We define a partial broadcaster, $B$, as a transformation satisfying the following equation:
\[\begin{tikzpicture}[scale=0.5]
	\begin{pgfonlayer}{nodelayer}
		\node [style=none] (0) at (0, 1) {};
		\node [style=none] (1) at (1, -0.5000001) {};
		\node [style=none] (2) at (1, 1) {};
		\node [style=none] (3) at (1, 1.75) {};
		\node [style=none] (4) at (1, -1.25) {};
		\node [style=none] (5) at (2.5, -1.25) {};
		\node [style=none] (6) at (2.5, 1.75) {};
		\node [style=none] (7) at (2.5, 1) {};
		\node [style=none] (8) at (2.5, -0.5000001) {};
		\node [style=none] (9) at (3.5, 1) {};
		\node [style=trace] (10) at (3.5, -0.5000001) {};
		\node [style=none] (11) at (5, 0.5000001) {$=$};
		\node [style=none] (12) at (1.75, 0.2500001) {$B$};
		\node [style=none] (13) at (6.75, 1) {};
		\node [style=none] (14) at (9.25, 1) {};
	\end{pgfonlayer}
	\begin{pgfonlayer}{edgelayer}
		\draw (3.center) to (4.center);
		\draw (4.center) to (5.center);
		\draw (5.center) to (6.center);
		\draw (6.center) to (3.center);
		\draw (7.center) to (9.center);
		\draw (8.center) to (10);
		\draw (0.center) to (2.center);
		\draw (13.center) to (14.center);
	\end{pgfonlayer}
\end{tikzpicture}\]
The standard broadcasting map additionally satisfies, an equivalent equation but with the trace on the top system.

We can define the following transformations using a local reversible interaction $T$:
\[\begin{tikzpicture}[scale=0.5]
	\begin{pgfonlayer}{nodelayer}
		\node [style=none] (0) at (0, 1) {};
		\node [style=none] (1) at (1, -0.5000001) {};
		\node [style=none] (2) at (1, 1) {};
		\node [style=none] (3) at (1, 1.75) {};
		\node [style=none] (4) at (1, -1.25) {};
		\node [style=none] (5) at (2.5, -1.25) {};
		\node [style=none] (6) at (2.5, 1.75) {};
		\node [style=none] (7) at (2.5, 1) {};
		\node [style=none] (8) at (2.5, -0.5000001) {};
		\node [style=none] (9) at (3.5, 1) {};
		\node [style=none] (10) at (5, 0.5000001) {$:=$};
		\node [style=none] (11) at (1.75, 0.2500001) {$B_b$};
		\node [style=none] (12) at (3.5, -0.5000001) {};
		\node [style=none] (13) at (10.75, 1) {};
		\node [style=none] (14) at (10.75, 1.75) {};
		\node [style=none] (15) at (6.5, 1) {};
		\node [style=none] (16) at (14.25, -0.5000001) {};
		\node [style=none] (17) at (10, 0.2500001) {$T$};
		\node [style=none] (18) at (9.25, -1.25) {};
		\node [style=none] (19) at (11.75, 1) {};
		\node [style=none] (20) at (10.75, -1.25) {};
		\node [style=none] (21) at (9.25, 1.75) {};
		\node [style=none] (22) at (10.75, -0.5000001) {};
		\node [style=none] (23) at (9.25, 1) {};
		\node [style=none] (24) at (9.25, -0.5000001) {};
		\node [style=none] (25) at (8.25, -0.5000001) {};
		\node [style=none] (26) at (8.25, 0.2500001) {};
		\node [style=none] (27) at (8.25, -1.25) {};
		\node [style=none] (28) at (7.75, -0.5000001) {$b$};
		\node [style=none] (29) at (11.75, 1.75) {};
		\node [style=none] (30) at (11.75, 0.2500001) {};
		\node [style=none] (31) at (13.25, 0.2500001) {};
		\node [style=none] (32) at (13.25, 1.75) {};
		\node [style=none] (33) at (12.5, 1) {$X_b^{-1}$};
		\node [style=none] (34) at (13.25, 1) {};
		\node [style=none] (35) at (14.25, 1) {};
	\end{pgfonlayer}
	\begin{pgfonlayer}{edgelayer}
		\draw (3.center) to (4.center);
		\draw (4.center) to (5.center);
		\draw (5.center) to (6.center);
		\draw (6.center) to (3.center);
		\draw (7.center) to (9.center);
		\draw (0.center) to (2.center);
		\draw (8.center) to (12.center);
		\draw (21.center) to (18.center);
		\draw (18.center) to (20.center);
		\draw (20.center) to (14.center);
		\draw (14.center) to (21.center);
		\draw (13.center) to (19.center);
		\draw (15.center) to (23.center);
		\draw (22.center) to (16.center);
		\draw [bend right=90, looseness=2.50] (26.center) to (27.center);
		\draw (25.center) to (24.center);
		\draw (26.center) to (27.center);
		\draw (29.center) to (30.center);
		\draw (30.center) to (31.center);
		\draw (31.center) to (32.center);
		\draw (32.center) to (29.center);
		\draw (34.center) to (35.center);
	\end{pgfonlayer}
\end{tikzpicture}\]
and
\[\begin{tikzpicture}[scale=0.5]
	\begin{pgfonlayer}{nodelayer}
		\node [style=none] (0) at (0, 1) {};
		\node [style=none] (1) at (1, -0.5000001) {};
		\node [style=none] (2) at (1, 1) {};
		\node [style=none] (3) at (1, 1.75) {};
		\node [style=none] (4) at (1, -1.25) {};
		\node [style=none] (5) at (2.5, -1.25) {};
		\node [style=none] (6) at (2.5, 1.75) {};
		\node [style=none] (7) at (2.5, 1) {};
		\node [style=none] (8) at (2.5, -0.5000001) {};
		\node [style=none] (9) at (3.5, 1) {};
		\node [style=none] (10) at (5, 0.5000001) {$:=$};
		\node [style=none] (11) at (1.75, 0.2500001) {$B'_a$};
		\node [style=none] (12) at (3.5, -0.5000001) {};
		\node [style=none] (13) at (11.25, -0.5000001) {};
		\node [style=none] (14) at (11.25, 1.75) {};
		\node [style=none] (15) at (6.5, 1) {};
		\node [style=none] (16) at (13.75, 1) {};
		\node [style=none] (17) at (10.5, 0.2500001) {$T$};
		\node [style=none] (18) at (9.750001, -1.25) {};
		\node [style=none] (19) at (12.25, -0.5000001) {};
		\node [style=none] (20) at (11.25, -1.25) {};
		\node [style=none] (21) at (9.750001, 1.75) {};
		\node [style=none] (22) at (11.25, 1) {};
		\node [style=none] (23) at (8.499999, -0.5000001) {};
		\node [style=none] (24) at (9.750001, 1) {};
		\node [style=none] (25) at (8.750001, 1) {};
		\node [style=none] (26) at (8.750001, 1.75) {};
		\node [style=none] (27) at (8.750001, 0.2500001) {};
		\node [style=none] (28) at (8.25, 1) {$a$};
		\node [style=none] (29) at (12.25, 0.2500001) {};
		\node [style=none] (30) at (12.25, -1.25) {};
		\node [style=none] (31) at (13.75, -1.25) {};
		\node [style=none] (32) at (13.75, 0.2500001) {};
		\node [style=none] (33) at (13, -0.5000001) {$Y_a^{-1}$};
		\node [style=none] (34) at (13.75, -0.5000001) {};
		\node [style=none] (35) at (15.25, 1) {};
		\node [style=none] (36) at (15.25, -0.5000001) {};
		\node [style=none] (37) at (9.750001, -0.5000001) {};
	\end{pgfonlayer}
	\begin{pgfonlayer}{edgelayer}
		\draw (3.center) to (4.center);
		\draw (4.center) to (5.center);
		\draw (5.center) to (6.center);
		\draw (6.center) to (3.center);
		\draw (7.center) to (9.center);
		\draw (0.center) to (2.center);
		\draw (8.center) to (12.center);
		\draw (21.center) to (18.center);
		\draw (18.center) to (20.center);
		\draw (20.center) to (14.center);
		\draw (14.center) to (21.center);
		\draw (13.center) to (19.center);
		\draw [in=180, out=0, looseness=1.25] (15.center) to (23.center);
		\draw (22.center) to (16.center);
		\draw [bend right=90, looseness=2.50] (26.center) to (27.center);
		\draw (25.center) to (24.center);
		\draw (26.center) to (27.center);
		\draw (29.center) to (30.center);
		\draw (30.center) to (31.center);
		\draw (31.center) to (32.center);
		\draw (32.center) to (29.center);
		\draw [in=180, out=0, looseness=1.00] (34.center) to (35.center);
		\draw [in=180, out=0, looseness=1.00] (16.center) to (36.center);
		\draw (23.center) to (37.center);
	\end{pgfonlayer}
\end{tikzpicture}\]
It is simple to check that these satisfy the defining equation for a partial broadcaster:
\[\begin{tikzpicture}[scale=0.5]
	\begin{pgfonlayer}{nodelayer}
		\node [style=none] (0) at (0, 1) {};
		\node [style=none] (1) at (1, -0.5000001) {};
		\node [style=none] (2) at (1, 1) {};
		\node [style=none] (3) at (1, 1.75) {};
		\node [style=none] (4) at (1, -1.25) {};
		\node [style=none] (5) at (2.5, -1.25) {};
		\node [style=none] (6) at (2.5, 1.75) {};
		\node [style=none] (7) at (2.5, 1) {};
		\node [style=none] (8) at (2.5, -0.5000001) {};
		\node [style=none] (9) at (3.5, 1) {};
		\node [style=none] (10) at (5, 0.5000001) {$=$};
		\node [style=none] (11) at (1.75, 0.2500001) {$B_b$};
		\node [style=none] (12) at (3.5, -0.5000001) {};
		\node [style=none] (13) at (10.75, 1) {};
		\node [style=none] (14) at (10.75, 1.75) {};
		\node [style=none] (15) at (7, 1) {};
		\node [style=none] (16) at (14.25, -0.5000001) {};
		\node [style=none] (17) at (10, 0.2500001) {$T$};
		\node [style=none] (18) at (9.25, -1.25) {};
		\node [style=none] (19) at (11.75, 1) {};
		\node [style=none] (20) at (10.75, -1.25) {};
		\node [style=none] (21) at (9.25, 1.75) {};
		\node [style=none] (22) at (10.75, -0.5000001) {};
		\node [style=none] (23) at (9.25, 1) {};
		\node [style=none] (24) at (9.25, -0.5000001) {};
		\node [style=none] (25) at (8.25, -0.5000001) {};
		\node [style=none] (26) at (8.25, 0.2500001) {};
		\node [style=none] (27) at (8.25, -1.25) {};
		\node [style=none] (28) at (7.75, -0.5000001) {$b$};
		\node [style=none] (29) at (11.75, 1.75) {};
		\node [style=none] (30) at (11.75, 0.2500001) {};
		\node [style=none] (31) at (13.25, 0.2500001) {};
		\node [style=none] (32) at (13.25, 1.75) {};
		\node [style=none] (33) at (12.5, 1) {$X_b^{-1}$};
		\node [style=none] (34) at (13.25, 1) {};
		\node [style=none] (35) at (14.25, 1) {};
		\node [style=none] (36) at (0, 1.75) {};
		\node [style=none] (37) at (0, 0.2500001) {};
		\node [style=none] (38) at (0, 1) {};
		\node [style=none] (39) at (1, 1) {};
		\node [style=none] (40) at (-0.5000001, 1) {$a$};
		\node [style=none] (41) at (-2.25, 1.75) {$\forall a$};
		\node [style=none] (42) at (7, 1.75) {};
		\node [style=none] (43) at (7, 0.2500001) {};
		\node [style=none] (44) at (7, 1) {};
		\node [style=none] (45) at (7, 1) {};
		\node [style=none] (46) at (6.5, 1) {$a$};
		\node [style=none] (47) at (7.75, -3.75) {};
		\node [style=none] (48) at (7.75, -5) {};
		\node [style=none] (49) at (12.75, -3) {};
		\node [style=none] (50) at (7.25, -3.75) {$a$};
		\node [style=none] (51) at (5, -4.25) {$=$};
		\node [style=none] (52) at (7.75, -3) {};
		\node [style=none] (53) at (8.750001, -3.75) {};
		\node [style=none] (54) at (11.25, -3) {};
		\node [style=none] (55) at (12.75, -3.75) {};
		\node [style=none] (56) at (7.75, -3.75) {};
		\node [style=none] (57) at (7.75, -6.5) {};
		\node [style=none] (58) at (12, -3.75) {$X_b^{-1}$};
		\node [style=none] (59) at (10.25, -3.75) {};
		\node [style=none] (60) at (11.25, -3.75) {};
		\node [style=none] (61) at (7.75, -5.75) {};
		\node [style=none] (62) at (7.75, -3.75) {};
		\node [style=none] (63) at (12.75, -4.5) {};
		\node [style=none] (64) at (8.750001, -5.75) {};
		\node [style=none] (65) at (11.25, -4.5) {};
		\node [style=none] (66) at (11.25, -5.75) {};
		\node [style=none] (67) at (10.25, -5.75) {};
		\node [style=none] (68) at (13.75, -3.75) {};
		\node [style=none] (69) at (7.75, -4.5) {};
		\node [style=none] (70) at (7.25, -5.75) {$b$};
		\node [style=none] (71) at (10.25, -4.5) {};
		\node [style=none] (72) at (9.499999, -3.75) {$X_b$};
		\node [style=none] (73) at (8.750001, -3) {};
		\node [style=none] (74) at (10.25, -3) {};
		\node [style=none] (75) at (10.25, -3.75) {};
		\node [style=none] (76) at (8.750001, -4.5) {};
		\node [style=none] (77) at (8.750001, -3.75) {};
		\node [style=none] (78) at (10.25, -6.5) {};
		\node [style=none] (79) at (9.499999, -5.75) {$Y_a$};
		\node [style=none] (80) at (8.750001, -5) {};
		\node [style=none] (81) at (10.25, -5) {};
		\node [style=none] (82) at (10.25, -5.75) {};
		\node [style=none] (83) at (8.750001, -6.5) {};
		\node [style=none] (84) at (8.750001, -5.75) {};
		\node [style=trace] (85) at (3.75, -0.5000001) {};
		\node [style=trace] (86) at (14.5, -0.5000001) {};
		\node [style=trace] (87) at (11.5, -5.75) {};
		\node [style=none] (88) at (7.5, -8.25) {};
		\node [style=none] (89) at (7.5, -9.75) {};
		\node [style=none] (90) at (7, -9) {$a$};
		\node [style=none] (91) at (7.5, -9) {};
		\node [style=none] (92) at (7.5, -9) {};
		\node [style=none] (93) at (5, -9) {$=$};
		\node [style=none] (94) at (7.5, -9) {};
		\node [style=none] (95) at (8.750001, -9) {};
	\end{pgfonlayer}
	\begin{pgfonlayer}{edgelayer}
		\draw (3.center) to (4.center);
		\draw (4.center) to (5.center);
		\draw (5.center) to (6.center);
		\draw (6.center) to (3.center);
		\draw (7.center) to (9.center);
		\draw (0.center) to (2.center);
		\draw (8.center) to (12.center);
		\draw (21.center) to (18.center);
		\draw (18.center) to (20.center);
		\draw (20.center) to (14.center);
		\draw (14.center) to (21.center);
		\draw (13.center) to (19.center);
		\draw (15.center) to (23.center);
		\draw (22.center) to (16.center);
		\draw [bend right=90, looseness=2.50] (26.center) to (27.center);
		\draw (25.center) to (24.center);
		\draw (26.center) to (27.center);
		\draw (29.center) to (30.center);
		\draw (30.center) to (31.center);
		\draw (31.center) to (32.center);
		\draw (32.center) to (29.center);
		\draw (34.center) to (35.center);
		\draw [bend right=90, looseness=2.50] (36.center) to (37.center);
		\draw (38.center) to (39.center);
		\draw (36.center) to (37.center);
		\draw [bend right=90, looseness=2.50] (42.center) to (43.center);
		\draw (42.center) to (43.center);
		\draw (59.center) to (60.center);
		\draw (56.center) to (53.center);
		\draw (67.center) to (66.center);
		\draw [bend right=90, looseness=2.50] (48.center) to (57.center);
		\draw (61.center) to (64.center);
		\draw (48.center) to (57.center);
		\draw (54.center) to (65.center);
		\draw (65.center) to (63.center);
		\draw (63.center) to (49.center);
		\draw (49.center) to (54.center);
		\draw (55.center) to (68.center);
		\draw [bend right=90, looseness=2.50] (52.center) to (69.center);
		\draw (52.center) to (69.center);
		\draw (73.center) to (76.center);
		\draw (76.center) to (71.center);
		\draw (71.center) to (74.center);
		\draw (74.center) to (73.center);
		\draw (80.center) to (83.center);
		\draw (83.center) to (78.center);
		\draw (78.center) to (81.center);
		\draw (81.center) to (80.center);
		\draw [bend right=90, looseness=2.50] (88.center) to (89.center);
		\draw (88.center) to (89.center);
		\draw (91.center) to (95.center);
	\end{pgfonlayer}
\end{tikzpicture}\]
\[\begin{tikzpicture}[scale=0.5]
	\begin{pgfonlayer}{nodelayer}
		\node [style=none] (0) at (0, 1) {};
		\node [style=none] (1) at (1, -0.5000001) {};
		\node [style=none] (2) at (1, 1) {};
		\node [style=none] (3) at (1, 1.75) {};
		\node [style=none] (4) at (1, -1.25) {};
		\node [style=none] (5) at (2.5, -1.25) {};
		\node [style=none] (6) at (2.5, 1.75) {};
		\node [style=none] (7) at (2.5, 1) {};
		\node [style=none] (8) at (2.5, -0.5000001) {};
		\node [style=none] (9) at (3.5, 1) {};
		\node [style=trace] (10) at (3.5, -0.5000001) {};
		\node [style=none] (11) at (5, 0.5000001) {$=$};
		\node [style=none] (12) at (1.75, 0.2500001) {$B_b$};
		\node [style=none] (13) at (6.75, 1) {};
		\node [style=none] (14) at (9.25, 1) {};
		\node [style=none] (15) at (-1.5, 0.5000001) {$\implies$};
	\end{pgfonlayer}
	\begin{pgfonlayer}{edgelayer}
		\draw (3.center) to (4.center);
		\draw (4.center) to (5.center);
		\draw (5.center) to (6.center);
		\draw (6.center) to (3.center);
		\draw (7.center) to (9.center);
		\draw (8.center) to (10);
		\draw (0.center) to (2.center);
		\draw (13.center) to (14.center);
	\end{pgfonlayer}
\end{tikzpicture}\]
Similarly we can check that this equation is satisfied for $B'_a$.
\endproof

Note that we can have trivial partial broadcasters, these are of the form,
\[\begin{tikzpicture}[scale=0.5]
	\begin{pgfonlayer}{nodelayer}
		\node [style=none] (0) at (0, 1) {};
		\node [style=none] (1) at (1, -0.5000001) {};
		\node [style=none] (2) at (1, 1) {};
		\node [style=none] (3) at (1, 1.75) {};
		\node [style=none] (4) at (1, -1.25) {};
		\node [style=none] (5) at (2.5, -1.25) {};
		\node [style=none] (6) at (2.5, 1.75) {};
		\node [style=none] (7) at (2.5, 1) {};
		\node [style=none] (8) at (2.5, -0.5000001) {};
		\node [style=none] (9) at (3.5, 1) {};
		\node [style=none] (10) at (5.25, -0) {$=$};
		\node [style=none] (11) at (1.75, 0.2500001) {$B$};
		\node [style=none] (12) at (6.75, 1) {};
		\node [style=none] (13) at (9.25, 1) {};
		\node [style=none] (14) at (3.5, -0.5000001) {};
		\node [style=none] (15) at (9.25, -0.5000001) {};
		\node [style=none] (16) at (8.25, -0.5000001) {};
		\node [style=none] (17) at (8.25, 0.2500001) {};
		\node [style=none] (18) at (8.25, -1.25) {};
		\node [style=none] (19) at (7.75, -0.5000001) {$s$};
	\end{pgfonlayer}
	\begin{pgfonlayer}{edgelayer}
		\draw (3.center) to (4.center);
		\draw (4.center) to (5.center);
		\draw (5.center) to (6.center);
		\draw (6.center) to (3.center);
		\draw (7.center) to (9.center);
		\draw (0.center) to (2.center);
		\draw (12.center) to (13.center);
		\draw (8.center) to (14.center);
		\draw [bend right=90, looseness=2.25] (17.center) to (18.center);
		\draw (16.center) to (15.center);
		\draw (17.center) to (18.center);
	\end{pgfonlayer}
\end{tikzpicture}\]
where $s$ is a normalised state. Note also that trivial locally reversible interactions give rise to only trivial partial broadcasters.

\begin{lemma}Partial broadcasters allow for non-disturbing measurements. 
\end{lemma}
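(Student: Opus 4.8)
The plan is to use a partial broadcaster $B:A\to A\otimes X$ to pull any measurement on the copied-out system $X$ back to a non-disturbing measurement on $A$. Concretely, I would fix any measurement $\{f_j\}_j$ on $X$, so that $\sum_j f_j=u_X$, and define the instrument $\mathcal I_j:=(\mathds{1}_A\otimes f_j)\circ B$ on $A$, with associated effects $e_j:=u_A\circ\mathcal I_j=(u_A\otimes f_j)\circ B$; diagrammatically this is just $B$ with its lower output wire fed into $f_j$. Each $\mathcal I_j$ is a legitimate (possibly probabilistic) transformation and each $e_j$ a legitimate effect with $0\le e_j\le u_A$, since $0\le f_j\le u_X$ and, as states of $A\boxtimes X$ are by definition mixtures of product states, $u_A\otimes f_j$ is a valid effect on $A\boxtimes X$.

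The non-disturbance is then immediate from the defining equation of the partial broadcaster: coarse-graining over the outcome gives $\sum_j\mathcal I_j=(\mathds{1}_A\otimes u_X)\circ B=\mathds{1}_A$, so forgetting which outcome occurred returns the identity channel and leaves every input state untouched. Hence $\{e_j\}$ is a non-disturbing measurement on $A$ for any choice of $\{f_j\}$, and the remark already made in the text --- that a trivial $B$ yields only trivial measurements --- reappears as a special case.

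The real work is to show the resulting measurement is non-trivial whenever $B$ is. I would introduce the ``marginal-on-$X$'' channel $M:=(u_A\otimes\mathds{1}_X)\circ B:A\to X$, so that $e_j=f_j\circ M$. The key point, using that we work with the minimal tensor product, is that $B$ must send each pure state $\alpha$ of $A$ to a product state: the $A$-marginal of $B(\alpha)$ is $(\mathds{1}_A\otimes u_X)(B(\alpha))=\alpha$ by the partial-broadcaster equation, and extremality of $\alpha$ forces $B(\alpha)=\alpha\otimes M(\alpha)$. Therefore, if $M$ were state-independent, say $M\equiv\sigma$, linearity plus the fact that pure states span the vector space would give $B=\mathds{1}_A\otimes\sigma$, i.e. $B$ would be trivial. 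So a non-trivial $B$ has a state-dependent marginal $M$, and since effects separate states there is an effect $f$ on $X$ with $f\circ M=(u_A\otimes f)\circ B$ not proportional to $u_A$; completing $f$ to the measurement $\{f,u_X-f\}$ then yields, via the construction above, a non-trivial non-disturbing two-outcome measurement on $A$.

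I expect the main obstacle to be exactly this non-triviality step --- in particular, establishing rigorously that in the minimal tensor product a bipartite state with a pure marginal is necessarily a product state, which is what lets us write $B(\alpha)=\alpha\otimes M(\alpha)$ and thereby rules out the a priori possibility of a non-product $B$ with a state-independent $X$-marginal. The non-disturbance half of the lemma is, by contrast, a one-line diagrammatic consequence of the partial-broadcaster equation.
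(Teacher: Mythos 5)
Your proposal is correct and follows essentially the same route as the paper: the measurement is $M_e=(\mathds{1}\otimes e)\circ B$, and the key step in both arguments is that the minimal tensor product together with the broadcaster equation forces $B(s)=s\otimes f(s)$ on pure states. The only differences are minor --- you phrase non-disturbance via the coarse-grained channel being the identity (equivalent, by extremality of pure states, to the paper's per-outcome condition $M_e(s)\propto s$), and you additionally establish non-triviality of the resulting measurement, which the paper defers to a remark and to the subsequent theorem.
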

\proof The defining equation for a partial broadcaster in theories where composition is given by the minimal tensor product implies the following:
\[\begin{tikzpicture}[scale=0.5]
	\begin{pgfonlayer}{nodelayer}
		\node [style=none] (0) at (0, 1) {};
		\node [style=none] (1) at (1, -0.5000001) {};
		\node [style=none] (2) at (1, 1) {};
		\node [style=none] (3) at (1, 1.75) {};
		\node [style=none] (4) at (1, -1.25) {};
		\node [style=none] (5) at (2.5, -1.25) {};
		\node [style=none] (6) at (2.5, 1.75) {};
		\node [style=none] (7) at (2.5, 1) {};
		\node [style=none] (8) at (2.5, -0.5000001) {};
		\node [style=none] (9) at (3.5, 1) {};
		\node [style=none] (10) at (5, 0.4999999) {$=$};
		\node [style=none] (11) at (1.75, 0.2500001) {$B$};
		\node [style=none] (12) at (3.5, -0.4999999) {};
		\node [style=none] (13) at (0, 1.75) {};
		\node [style=none] (14) at (0, 0.25) {};
		\node [style=none] (15) at (-0.4999999, 1) {$s$};
		\node [style=none] (16) at (7, 1.25) {$s$};
		\node [style=none] (17) at (7.5, 0.4999999) {};
		\node [style=none] (18) at (8.5, 1.25) {};
		\node [style=none] (19) at (7.5, 2) {};
		\node [style=none] (20) at (7.5, 1.25) {};
		\node [style=none] (21) at (6.75, -0.7500001) {$f(s)$};
		\node [style=none] (22) at (8.5, -0.7500001) {};
		\node [style=none] (23) at (7.5, -0) {};
		\node [style=none] (24) at (7.5, -0.7500001) {};
		\node [style=none] (25) at (-3, 1.75) {$\forall s$};
		\node [style=none] (26) at (6.75, -0) {};
		\node [style=none] (27) at (6.75, -1.5) {};
		\node [style=none] (28) at (7.5, -1.5) {};
	\end{pgfonlayer}
	\begin{pgfonlayer}{edgelayer}
		\draw (3.center) to (4.center);
		\draw (4.center) to (5.center);
		\draw (5.center) to (6.center);
		\draw (6.center) to (3.center);
		\draw (7.center) to (9.center);
		\draw (0.center) to (2.center);
		\draw (8.center) to (12.center);
		\draw [bend right=90, looseness=2.25] (13.center) to (14.center);
		\draw (13.center) to (14.center);
		\draw (20.center) to (18.center);
		\draw [bend right=90, looseness=2.25] (19.center) to (17.center);
		\draw (19.center) to (17.center);
		\draw (24.center) to (22.center);
		\draw (26.center) to (23.center);
		\draw (23.center) to (28.center);
		\draw (28.center) to (27.center);
		\draw [bend right=90, looseness=2.00] (26.center) to (27.center);
	\end{pgfonlayer}
\end{tikzpicture}\]
where $f$ is some function from the set of pure states in $\mathcal{A}$ to the set of pure states in $\mathcal{B}$. Then we can construct a set of non-disturbing measurements $\{M_e\}$ as:
\[\begin{tikzpicture}[scale=0.5]
	\begin{pgfonlayer}{nodelayer}
		\node [style=none] (0) at (1, -0.5000001) {};
		\node [style=none] (1) at (1, 1) {};
		\node [style=none] (2) at (1, 1.75) {};
		\node [style=none] (3) at (1, -1.25) {};
		\node [style=none] (4) at (2.5, -1.25) {};
		\node [style=none] (5) at (2.5, 1.75) {};
		\node [style=none] (6) at (2.5, 1) {};
		\node [style=none] (7) at (2.5, -0.5000001) {};
		\node [style=none] (8) at (5, 0.9999998) {};
		\node [style=none] (9) at (1.75, 0.2500001) {$B$};
		\node [style=none] (10) at (3.5, -0.4999999) {};
		\node [style=none] (11) at (0, 0.9999998) {};
		\node [style=none] (12) at (3.5, 0.2500001) {};
		\node [style=none] (13) at (3.5, -1.25) {};
		\node [style=none] (14) at (3.999999, -0.4999999) {$e$};
		\node [style=none] (15) at (-7, 0.7500002) {};
		\node [style=none] (16) at (-6, 0.7500002) {};
		\node [style=none] (17) at (-6, 1.5) {};
		\node [style=none] (18) at (-6, -0) {};
		\node [style=none] (19) at (-4.5, -0) {};
		\node [style=none] (20) at (-4.5, 1.5) {};
		\node [style=none] (21) at (-4.5, 0.7500002) {};
		\node [style=none] (22) at (-3.5, 0.7500002) {};
		\node [style=none] (23) at (-1.75, 0.4999999) {$:=$};
		\node [style=none] (24) at (-5.25, 0.7500002) {$M_e$};
	\end{pgfonlayer}
	\begin{pgfonlayer}{edgelayer}
		\draw (2.center) to (3.center);
		\draw (3.center) to (4.center);
		\draw (4.center) to (5.center);
		\draw (5.center) to (2.center);
		\draw (6.center) to (8.center);
		\draw (7.center) to (10.center);
		\draw (11.center) to (1.center);
		\draw [bend left=90, looseness=2.25] (12.center) to (13.center);
		\draw (12.center) to (13.center);
		\draw (15.center) to (16.center);
		\draw (17.center) to (18.center);
		\draw (18.center) to (19.center);
		\draw (19.center) to (20.center);
		\draw (20.center) to (17.center);
		\draw (21.center) to (22.center);
	\end{pgfonlayer}
\end{tikzpicture}\]
where these are non-disturbing as they satisfy,
\[\begin{tikzpicture}[scale=0.5]
	\begin{pgfonlayer}{nodelayer}
		\node [style=none] (0) at (2, -0.4999999) {};
		\node [style=none] (1) at (2, 0.9999998) {};
		\node [style=none] (2) at (2, 1.75) {};
		\node [style=none] (3) at (2, -1.25) {};
		\node [style=none] (4) at (3.5, -1.25) {};
		\node [style=none] (5) at (3.5, 1.75) {};
		\node [style=none] (6) at (3.5, 0.9999998) {};
		\node [style=none] (7) at (3.5, -0.4999999) {};
		\node [style=none] (8) at (6, 0.9999998) {};
		\node [style=none] (9) at (2.75, 0.2500001) {$B$};
		\node [style=none] (10) at (4.5, -0.4999999) {};
		\node [style=none] (11) at (0.9999998, 0.9999998) {};
		\node [style=none] (12) at (4.5, 0.2500001) {};
		\node [style=none] (13) at (4.5, -1.25) {};
		\node [style=none] (14) at (5, -0.4999999) {$e$};
		\node [style=none] (15) at (-7, 0.9999998) {};
		\node [style=none] (16) at (-6, 0.9999998) {};
		\node [style=none] (17) at (-6, 1.75) {};
		\node [style=none] (18) at (-6, 0.2500001) {};
		\node [style=none] (19) at (-4.5, 0.2500001) {};
		\node [style=none] (20) at (-4.5, 1.75) {};
		\node [style=none] (21) at (-4.5, 0.9999998) {};
		\node [style=none] (22) at (-3.5, 0.9999998) {};
		\node [style=none] (23) at (-1.75, 1) {$=$};
		\node [style=none] (24) at (-5.25, 0.9999998) {$M_e$};
		\node [style=none] (25) at (-7, 1.75) {};
		\node [style=none] (26) at (-7, 0.2500001) {};
		\node [style=none] (27) at (-7.5, 0.9999998) {$s$};
		\node [style=none] (28) at (0.9999998, 0.2500001) {};
		\node [style=none] (29) at (0.9999998, 0.9999998) {};
		\node [style=none] (30) at (0.9999998, 1.75) {};
		\node [style=none] (31) at (0.5000003, 0.9999998) {$s$};
		\node [style=none] (32) at (0.9999998, -3.5) {};
		\node [style=none] (33) at (2.25, -5.75) {};
		\node [style=none] (34) at (0.9999998, -4.25) {};
		\node [style=none] (35) at (2.75, -5.75) {$e$};
		\node [style=none] (36) at (0.5000003, -3.5) {$s$};
		\node [style=none] (37) at (0.9999998, -2.75) {};
		\node [style=none] (38) at (0.9999998, -3.5) {};
		\node [style=none] (39) at (3, -3.5) {};
		\node [style=none] (40) at (-1.75, -3.5) {$=$};
		\node [style=none] (41) at (2.25, -6.5) {};
		\node [style=none] (42) at (2.25, -5) {};
		\node [style=none] (43) at (1.5, -5) {};
		\node [style=none] (44) at (0.7500001, -5) {};
		\node [style=none] (45) at (1.5, -5.75) {};
		\node [style=none] (46) at (1.5, -6.5) {};
		\node [style=none] (47) at (0.7500001, -6.5) {};
		\node [style=none] (48) at (0.7500001, -5.75) {$f(s)$};
		\node [style=none] (49) at (-9.75, 2) {$\forall s$};
		\node [style=none] (50) at (0.9999998, -8) {};
		\node [style=none] (51) at (3, -8.75) {};
		\node [style=none] (52) at (0.9999998, -9.5) {};
		\node [style=none] (53) at (0.9999998, -8.75) {};
		\node [style=none] (54) at (-1.75, -8.75) {$\propto$};
		\node [style=none] (55) at (0.9999998, -8.75) {};
		\node [style=none] (56) at (0.5000003, -8.75) {$s$};
	\end{pgfonlayer}
	\begin{pgfonlayer}{edgelayer}
		\draw (2.center) to (3.center);
		\draw (3.center) to (4.center);
		\draw (4.center) to (5.center);
		\draw (5.center) to (2.center);
		\draw (6.center) to (8.center);
		\draw (7.center) to (10.center);
		\draw (11.center) to (1.center);
		\draw [bend left=90, looseness=2.25] (12.center) to (13.center);
		\draw (12.center) to (13.center);
		\draw (15.center) to (16.center);
		\draw (17.center) to (18.center);
		\draw (18.center) to (19.center);
		\draw (19.center) to (20.center);
		\draw (20.center) to (17.center);
		\draw (21.center) to (22.center);
		\draw [bend right=90, looseness=2.25] (25.center) to (26.center);
		\draw (25.center) to (26.center);
		\draw [bend right=90, looseness=2.25] (30.center) to (28.center);
		\draw (30.center) to (28.center);
		\draw [bend left=90, looseness=2.25] (42.center) to (41.center);
		\draw (42.center) to (41.center);
		\draw [bend right=90, looseness=2.25] (37.center) to (34.center);
		\draw (37.center) to (34.center);
		\draw (32.center) to (39.center);
		\draw (45.center) to (33.center);
		\draw (43.center) to (44.center);
		\draw (43.center) to (46.center);
		\draw (46.center) to (47.center);
		\draw [bend left=90, looseness=2.25] (47.center) to (44.center);
		\draw [bend right=90, looseness=2.25] (50.center) to (52.center);
		\draw (50.center) to (52.center);
		\draw (53.center) to (51.center);
	\end{pgfonlayer}
\end{tikzpicture}\]
\endproof

Note again that we can have the trivial case where the non-disturbing measurement has only a single outcome, i.e. such that $\forall s\ \ f(s)=c$, these are just a transformation proportional to the identity channel. This `decomposes' the state space in a trivial way, i.e. into a single component, and so does not lead to the state space having a classical degree of freedom. Any other non-disturbing measurement leads to a non-trivial decomposition.

We can then use the following result of Barnum  \emph{et al}:
\begin{lemma}\label{barnum lemma}
The existence of non-trivial non-disturbing measurements implies decomposability of the state space.
\end{lemma}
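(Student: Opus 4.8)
The plan is to extract from a non-trivial non-disturbing measurement an eigenspace decomposition of the ambient vector space that is compatible with the convex structure, and thereby realise $\mathcal{S}$ as a direct sum in the sense of Definition~\ref{def:classicality}. Since this is a result of Barnum \emph{et al.}, in the paper I would cite it and give the following sketch.

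First I would unpack the hypothesis. A non-disturbing measurement is a family of transformations $\{M_e\}$ with $\sum_e M_e=\mathds{1}$ such that, on every pure state $s$, $M_e\circ s=p(e|s)\,s$; composing with the deterministic effect $u$ gives $p(e|s)\in[0,1]$ and $\sum_e p(e|s)=1$. Each $M_e$ is linear on the finite-dimensional real space $V$ spanned by $\mathcal{S}$, and since the pure states span $V$ and every pure state is an eigenvector of $M_e$, the map $M_e$ is diagonalisable with a finite eigenvalue set $\Lambda_e$; positivity of $M_e$ makes these eigenvalues non-negative, so $V=\bigoplus_{\lambda\in\Lambda_e}V^e_\lambda$, with each nonzero eigenspace $V^e_\lambda$ spanned by the pure states it contains.

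Next I would invoke non-triviality: the measurement is trivial exactly when every $M_e$ is a scalar multiple of $\mathds{1}$ (state-independent statistics), so there is some $e$ with $|\Lambda_e|\ge 2$. Partitioning $\Lambda_e$ into two nonempty blocks gives a vector-space splitting $V=V_A\oplus V_B$ in which every pure state of $\mathcal{S}$ lies entirely in $V_A$ or entirely in $V_B$. Setting $\mathcal{A}:=\mathcal{S}\cap V_A$ and $\mathcal{B}:=\mathcal{S}\cap V_B$, with the induced cones and restricted unit effect, and using $u(x)=1$ for pure $x$, one checks that any normalised state decomposes uniquely as $v_A+v_B$ with $v_A\in\mathrm{cone}(\mathcal{A})$, $v_B\in\mathrm{cone}(\mathcal{B})$ and $u(v_A)+u(v_B)=1$; hence $\mathcal{S}$ is the convex hull of $\mathcal{A}$ and $\mathcal{B}$ placed in complementary affine subspaces, which is exactly $\mathcal{A}\oplus\mathcal{B}$, with the classical ``which-block'' label read off by $u$ composed with the spectral projector onto $\bigoplus_{\lambda\in A}V^e_\lambda$ (equivalently, a coarse-graining of $\{M_e\}$). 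This is the required decomposability.

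I expect the main obstacle to be this last identification: one must confirm that the purely linear-algebraic eigenspace splitting is an honest operational direct sum --- that $\mathrm{cone}(\mathcal{S})=\mathrm{cone}(\mathcal{A})\oplus\mathrm{cone}(\mathcal{B})$, that $u$ restricts to an order unit on each summand, and that $\mathcal{A}$ and $\mathcal{B}$ are themselves legitimate (compact, convex, full-dimensional in their subspaces) state spaces. The remaining bookkeeping is light: a possible zero eigenvalue and the presence of more than two eigenvalues are absorbed by the two-block coarse-graining, and one should note that the ``trivial'' case ruled out by the hypothesis is precisely $|\Lambda_e|=1$ for all $e$, matching the remark preceding the lemma.
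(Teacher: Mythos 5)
Your sketch is correct, but note that the paper does not actually prove this lemma: its entire ``proof'' is a citation to Theorem~5 of Barrett's paper, so what you have done is reconstruct the external argument rather than parallel an internal one. Your reconstruction is the standard one and it goes through: the pure states span the ambient space $V$ and each is an eigenvector of the linear map $M_e$, so $M_e$ is diagonalisable with every eigenspace spanned by the pure states it contains; non-triviality (some $M_e$ not proportional to $\mathds{1}$) yields at least two distinct eigenvalues, and grouping them into two blocks splits $V=V_A\oplus V_B$ with every pure state wholly inside one block. The one step you flag as a potential obstacle --- upgrading the linear splitting to an operational direct sum --- is in fact easy in this paper's conventions, because here $\mathcal{A}\oplus\mathcal{B}$ is \emph{defined} as the convex hull of the two summands placed in complementary subspaces: since $\mathcal{S}$ is the convex hull of its extreme points (finite dimension) and $u=1$ on pure states forces $\mathcal{S}\cap V_A$ to equal the convex hull of the pure states lying in $V_A$ (a subconvex combination of normalised states cannot vanish unless all its weights do), one gets $\mathcal{S}=\mathrm{Conv}(\mathcal{A}\cup\mathcal{B})=\mathcal{A}\oplus\mathcal{B}$ immediately, with no cone or order-unit bookkeeping required. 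In short, your route gives a short self-contained proof where the paper offers only a pointer to the literature; the two are consistent, since your eigenspace argument is essentially the content of the cited theorem.
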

\proof Theorem 5 of \cite{barrett2007information}.
\endproof

Note that every theory has a trivial non-disturbing measurement, that is, anything proportional to the identity channel. This `decomposes' the state space in a trivial way, i.e. into a single component, and so does not lead to the state space having a classical degree of freedom. Any other non-disturbing measurement leads to a non-trivial decomposition. From Lemma \ref{barnum lemma} it is straightforward to prove the following:

\begin{theorem}For theories without entanglement (def. \ref{def:Entanglement}) with entirely non-classical state spaces (def. \ref{def:classicality}), all local reversible interactions are trivial (def. \ref{def:LRI}).
\end{theorem}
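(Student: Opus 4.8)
The plan is to prove the contrapositive: working in a theory without entanglement (so all composites are minimal tensor products and every pure bipartite state is a product of pure states, def.~\ref{def:Entanglement}), I will show that a single non-trivial locally reversible interaction $T$ forces one of the two interacting systems to be decomposable, i.e.\ to carry a classical degree of freedom (def.~\ref{def:classicality}), contradicting ``entirely non-classical''. The skeleton is the chain Lemma~1 $\to$ Lemma~2 $\to$ Lemma~\ref{barnum lemma}; the substantive point is that the ``non-trivial'' label is preserved at each link, so the bulk of the work is bookkeeping on triviality qualifiers rather than new constructions.

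The main step, and the one I expect to be the real obstacle, is to show that a non-trivial $T$ produces a \emph{non-trivial} partial broadcaster. I would argue this contrapositively. Suppose every partial broadcaster $B_b$ and $B'_a$ furnished by Lemma~1 is trivial. Since $T$, $X_b$ and $Y_a$ are reversible they preserve purity, and pure states of the minimal tensor product are pure product states; combined with the partial-broadcaster identity this gives $B_b(a) = a\otimes c_b$ and $B'_a(b) = c'_a\otimes b$ for fixed normalised states $c_b, c'_a$, which unwinds (using $X_b^{-1}$, $Y_a^{-1}$) to $T(a\otimes b) = X_b(a)\otimes c_b = c'_a\otimes Y_a(b)$ for all pure $a,b$. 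Applying the deterministic effect $u$ to each tensor factor, uniqueness of the product decomposition of a pure product state forces $X_b(a) = c'_a$ (independent of $b$) and $Y_a(b) = c_b$ (independent of $a$); by linearity $X_b \equiv X$ and $Y_a \equiv Y$, so $T$ is a trivial interaction in the sense of def.~\ref{def:LRI} — contradiction. Hence at least one $B_b$ or $B'_a$ is non-trivial, i.e.\ the map $f$ of Lemma~2 with $B(s) = s\otimes f(s)$ is non-constant.

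Finally I would close the chain. Feeding the non-trivial partial broadcaster into Lemma~2 gives a non-disturbing measurement $\{M_e\}$ with $M_e(s) \propto \big(e\cdot f(s)\big)\, s$; since $f$ is non-constant and effects separate states there is an $e$ with $e(f(s_1))\neq e(f(s_2))$ for some $s_1,s_2$, so $M_e$ is not proportional to the identity channel and $\{M_e\}$ is a non-trivial non-disturbing measurement on the corresponding input system ($\mathcal{A}$ if it arose from some $B_b$, $\mathcal{B}$ if from some $B'_a$). By Lemma~\ref{barnum lemma} that system's state space is decomposable, contradicting the hypothesis that all the systems are non-classical. Therefore no non-trivial locally reversible interaction can exist, which is the claim. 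Everything outside the middle paragraph is a direct concatenation of the quoted lemmas; the care needed there is purely in the ``trivial broadcaster $\Rightarrow$ trivial interaction'' step, which relies on purity preservation under reversible maps, the product structure of pure states in the minimal tensor product, and uniqueness of product decompositions via $u$.
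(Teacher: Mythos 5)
Your proposal is correct and follows essentially the same route as the paper: the same chain (locally reversible interaction $\to$ partial broadcaster $\to$ non-disturbing measurement $\to$ decomposability via Lemma~\ref{barnum lemma}), with the same key observation that $f(s)=X_s(a)$ or $Y_s(b)$, merely run in the contrapositive direction (non-trivial $T$ forces a classical degree of freedom, rather than non-classicality forcing $X_s=X$, $Y_s=Y$). Your explicit handling of the ``non-trivial broadcaster $\Rightarrow$ non-constant $f$ $\Rightarrow$ separating effect $e$'' step is a slightly more careful rendering of a point the paper glosses over, but it is not a different argument.
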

\proof
As we are assuming that the state spaces are entirely non-classical, any non-disturbing measurement constructed from $T$ must be proportional to the identity.

This means that the any `$f$'  obtained from a non-disturbing measurement must be a constant function, i.e. $\forall s\ f(s)=c$ for some fixed state $c$.

If we consider how $f$ is defined for a non-disturbing measurement constructed from $T$, it is defined by,
\[\begin{tikzpicture}[scale=0.5]
	\begin{pgfonlayer}{nodelayer}
		\node [style=none] (0) at (1.5, -0) {};
		\node [style=none] (1) at (0.7499999, 0.7500001) {};
		\node [style=none] (2) at (0, 0.7500001) {};
		\node [style=none] (3) at (0.7499999, -0) {};
		\node [style=none] (4) at (0.7499999, -0.7500001) {};
		\node [style=none] (5) at (0, -0.7500001) {};
		\node [style=none] (6) at (0, -0) {$f(s)$};
		\node [style=none] (7) at (3, -0) {$=$};
		\node [style=none] (8) at (5.5, 0.75) {};
		\node [style=none] (9) at (5.5, -0.7500001) {};
		\node [style=none] (10) at (5.5, -0) {};
		\node [style=none] (11) at (6.25, -0) {};
		\node [style=none] (12) at (6.25, 0.75) {};
		\node [style=none] (13) at (7.75, 0.75) {};
		\node [style=none] (14) at (7.75, -0.7500001) {};
		\node [style=none] (15) at (7.75, -0) {};
		\node [style=none] (16) at (6.25, -0.7500001) {};
		\node [style=none] (17) at (8.5, -0) {};
		\node [style=none] (18) at (5, -0) {$a$};
		\node [style=none] (19) at (7, -0) {$X_s$};
	\end{pgfonlayer}
	\begin{pgfonlayer}{edgelayer}
		\draw (3.center) to (0.center);
		\draw (1.center) to (2.center);
		\draw (1.center) to (4.center);
		\draw (4.center) to (5.center);
		\draw [bend left=90, looseness=2.25] (5.center) to (2.center);
		\draw (10.center) to (11.center);
		\draw [bend right=90, looseness=2.25] (8.center) to (9.center);
		\draw (8.center) to (9.center);
		\draw (12.center) to (13.center);
		\draw (13.center) to (14.center);
		\draw (14.center) to (16.center);
		\draw (16.center) to (12.center);
		\draw (15.center) to (17.center);
	\end{pgfonlayer}
\end{tikzpicture}\]
or,
\[\begin{tikzpicture}[scale=0.5]
	\begin{pgfonlayer}{nodelayer}
		\node [style=none] (0) at (1.5, -0) {};
		\node [style=none] (1) at (0.7499999, 0.7500001) {};
		\node [style=none] (2) at (0, 0.7500001) {};
		\node [style=none] (3) at (0.7499999, -0) {};
		\node [style=none] (4) at (0.7499999, -0.7500001) {};
		\node [style=none] (5) at (0, -0.7500001) {};
		\node [style=none] (6) at (0, -0) {$f(s)$};
		\node [style=none] (7) at (3, -0) {$=$};
		\node [style=none] (8) at (5.5, 0.75) {};
		\node [style=none] (9) at (5.5, -0.7500001) {};
		\node [style=none] (10) at (5.5, -0) {};
		\node [style=none] (11) at (6.25, -0) {};
		\node [style=none] (12) at (6.25, 0.75) {};
		\node [style=none] (13) at (7.75, 0.75) {};
		\node [style=none] (14) at (7.75, -0.7500001) {};
		\node [style=none] (15) at (7.75, -0) {};
		\node [style=none] (16) at (6.25, -0.7500001) {};
		\node [style=none] (17) at (8.5, -0) {};
		\node [style=none] (18) at (5, -0) {$b$};
		\node [style=none] (19) at (7, -0) {$Y_s$};
	\end{pgfonlayer}
	\begin{pgfonlayer}{edgelayer}
		\draw (3.center) to (0.center);
		\draw (1.center) to (2.center);
		\draw (1.center) to (4.center);
		\draw (4.center) to (5.center);
		\draw [bend left=90, looseness=2.25] (5.center) to (2.center);
		\draw (10.center) to (11.center);
		\draw [bend right=90, looseness=2.25] (8.center) to (9.center);
		\draw (8.center) to (9.center);
		\draw (12.center) to (13.center);
		\draw (13.center) to (14.center);
		\draw (14.center) to (16.center);
		\draw (16.center) to (12.center);
		\draw (15.center) to (17.center);
	\end{pgfonlayer}
\end{tikzpicture}\]
where which $a$, $b$, $X\_$ or $Y\_$ is used depends on which partial broadcaster we construct.

However, non-classicality of the state space requires that $f(s)$ is constant for all possible broadcasters (else we get a non-trivial non-disturbing measurement and a non-trivial decomposition), and so, $\forall s\ X_s = X$ and $\forall s\ Y_s = Y$ where $X$ and $Y$ are some fixed reversible transformations.

Therefore,
\[\begin{tikzpicture}[scale=0.5]
	\begin{pgfonlayer}{nodelayer}
		\node [style=none] (0) at (0, 2) {};
		\node [style=none] (1) at (0, -0) {};
		\node [style=none] (2) at (1, -0) {};
		\node [style=none] (3) at (1, 2) {};
		\node [style=none] (4) at (1, 2.75) {};
		\node [style=none] (5) at (1, -0.7500002) {};
		\node [style=none] (6) at (2.5, -0.7500002) {};
		\node [style=none] (7) at (2.5, 2.75) {};
		\node [style=none] (8) at (2.5, 2) {};
		\node [style=none] (9) at (2.5, -0) {};
		\node [style=none] (10) at (3.5, 2) {};
		\node [style=none] (11) at (3.5, -0) {};
		\node [style=none] (12) at (5, 1) {$=$};
		\node [style=none] (13) at (1.75, 1) {$T$};
		\node [style=none] (14) at (10, 2) {};
		\node [style=none] (15) at (7.5, -0) {};
		\node [style=none] (16) at (6.5, 2) {};
		\node [style=none] (17) at (7.5, 2.75) {};
		\node [style=none] (18) at (9, -0) {};
		\node [style=none] (19) at (7.5, 2) {};
		\node [style=none] (20) at (9, 2) {};
		\node [style=none] (21) at (9, 2.75) {};
		\node [style=none] (22) at (6.5, -0) {};
		\node [style=none] (23) at (10, -0) {};
		\node [style=none] (24) at (7.5, 1.25) {};
		\node [style=none] (25) at (9, 1.25) {};
		\node [style=none] (26) at (9, 0.7499999) {};
		\node [style=none] (27) at (7.5, 0.7499999) {};
		\node [style=none] (28) at (7.5, -0.7499999) {};
		\node [style=none] (29) at (9, -0.7499999) {};
		\node [style=none] (30) at (8.250001, 2) {$X$};
		\node [style=none] (31) at (8.250001, -0) {$Y$};
	\end{pgfonlayer}
	\begin{pgfonlayer}{edgelayer}
		\draw (4.center) to (5.center);
		\draw (5.center) to (6.center);
		\draw (6.center) to (7.center);
		\draw (7.center) to (4.center);
		\draw (8.center) to (10.center);
		\draw (9.center) to (11.center);
		\draw (1.center) to (2.center);
		\draw (0.center) to (3.center);
		\draw (21.center) to (17.center);
		\draw (20.center) to (14.center);
		\draw (18.center) to (23.center);
		\draw (22.center) to (15.center);
		\draw (16.center) to (19.center);
		\draw (17.center) to (24.center);
		\draw (24.center) to (25.center);
		\draw (25.center) to (21.center);
		\draw (27.center) to (26.center);
		\draw (26.center) to (29.center);
		\draw (29.center) to (28.center);
		\draw (28.center) to (27.center);
	\end{pgfonlayer}
\end{tikzpicture}\]
and so $T$ is a trivial interaction.
\endproof
If we relax the constraint of non-classicality, and instead allow for decomposable state spaces which encode classical degrees of freedom, we find that although reversible interactions exist they are conditional on these classical degrees of freedom only.\\

\begin{theorem}Interactions between systems without entanglement are conditional transformations on classical degrees of freedom
\end{theorem}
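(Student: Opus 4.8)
The plan is to reuse, essentially verbatim, the chain of constructions from the preceding lemmas and then refuse to throw away the information that non-classicality previously destroyed. From a locally reversible interaction $T$ I build the partial broadcasters $B_b$ and $B'_a$ (Lemma 1), extract from each a family of non-disturbing measurements on $\mathcal{A}$ (resp.\ $\mathcal{B}$) via Lemma 2, and then record what the associated functions $f$ must look like. Unfolding the definition of $B_b$ gives $f_b(s)=Y_s(b)$, and unfolding $B'_a$ gives $g_a(t)=X_t(a)$; these are the objects that control $T$, since on pure product states $T(a\otimes b)=X_b(a)\otimes Y_a(b)$ by the definition of a locally reversible interaction.

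The core step is to show that $s\mapsto Y_s$ is constant on the sectors of a decomposition of $\mathcal{A}$. For each pure $b$ and each effect $e$ on $\mathcal{B}$, the measurement $M^{b}_e:=(\mathrm{id}_{\mathcal{A}}\otimes e)\circ B_b$ is non-disturbing with $M^b_e(s)=e(Y_s(b))\,s$; by Lemma~\ref{barnum lemma} the family $\{M^b_e\}_e$ induces a decomposition of $\mathcal{A}$ on whose components $e\circ f_b$ is constant, and since effects separate states, $f_b$ itself --- hence $s\mapsto Y_s(b)$ --- is constant on these components. Combining the non-disturbing measurements over all pure $b$ and all effects $e$ (a composition of non-disturbing measurements is non-disturbing, and in finite dimension the refinement terminates) yields a single finest decomposition $\mathcal{S}_{\mathcal{A}}=\bigoplus_i \mathcal{A}_i$ on which $Y_s=Y_{i(s)}$ depends only on the label $i(s)$ of the component containing $s$; by maximality each $\mathcal{A}_i$ admits no non-trivial non-disturbing measurement, so the label $i(s)$ is exactly a classical degree of freedom in the sense of Definition~\ref{def:classicality}. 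Running the same argument with $B'_a$ gives $\mathcal{B}=\bigoplus_j\mathcal{B}_j$ and $X_t=X_{j(t)}$.

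Putting these together, $T(a\otimes b)=X_{j(b)}(a)\otimes Y_{i(a)}(b)$ for all pure $a,b$, and by linearity on the whole minimal tensor product. Diagrammatically this says that $T$ first performs the (non-disturbing) read-out of the classical labels $i$ of $a$ and $j$ of $b$ and then, controlled on the pair $(i,j)$, applies the reversible transformation $X_j\otimes Y_i$ --- which itself can only permute sectors of the same type, since it is a reversible automorphism of the state space. Thus $T$ is a conditional transformation acting only on the classical degrees of freedom, which is the claim; when neither system has a classical degree of freedom the labels are trivial and we recover the previous theorem as the special case in which the one sector is the whole space.

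The main obstacle I anticipate is the middle paragraph: upgrading "each $\{M^b_e\}_e$ decomposes $\mathcal{A}$" to "one decomposition works for every $b$ simultaneously, and on it $Y_s$ is literally constant." This needs (i) that non-disturbing measurements refine to non-disturbing measurements, (ii) a finiteness argument so the common refinement exists, and (iii) the chain "block-constant for every effect $e$" $\Rightarrow$ "$Y_s(b)$ block-constant" $\Rightarrow$ (linearity in $b$) "$Y_s$ block-constant" --- each step routine, but collectively the real content. A secondary point that must be handled cleanly is checking that the $X_j$, $Y_i$ respect the sector structure, so that the phrase "conditional on the classical degrees of freedom" is well posed.
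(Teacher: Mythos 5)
Your proof is essentially correct, but it takes a genuinely different route from the paper's. The paper argues top-down: it fixes, independently of $T$, the finest decompositions $\mathcal{A}=\bigoplus_i\mathcal{A}_i$ and $\mathcal{B}=\bigoplus_j\mathcal{B}_j$ into indecomposable components, uses distributivity of $\boxtimes$ over $\oplus$ to write $\mathcal{A}\boxtimes\mathcal{B}=\bigoplus_{i,j}\mathcal{A}_i\boxtimes\mathcal{B}_j$, and then invokes Theorem 2 blockwise --- each $\mathcal{A}_i\boxtimes\mathcal{B}_j$ has no classical degree of freedom, so $T$ must act there as a product $X_i\otimes Y_j$, giving $T=\bigoplus_{i,j}X_i\otimes Y_j$. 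You instead argue bottom-up, rerunning the proof of Theorem 2 while retaining the information it previously destroyed: the non-disturbing measurements $(\mathrm{id}\otimes e)\circ B_b$ generated by $T$ itself carve out a (possibly coarser) decomposition on which $s\mapsto Y_s$ is constant. Your route buys two things: it exhibits the specific classical labels on which $T$ is actually conditioned rather than the maximal ones, and it handles explicitly a point the paper's formula glosses over, namely that $T$ need not preserve the blocks (a classical CNOT permutes them), so the conditioned maps must be read as automorphisms of the full local state spaces that may permute sectors --- which you note. The paper's route buys brevity and modularity: all the measurement-theoretic work is quarantined in Theorem 2, at the cost of needing (implicitly) that $\mathcal{A}_i\boxtimes\mathcal{B}_j$ is indecomposable when its factors are. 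The price of your route is the common-refinement step you flag: you must justify that finitely iterated refinements of direct-sum decompositions are again direct-sum decompositions, that finite dimension bounds the refinement depth, and that effects separate states so that block-constancy of $e(Y_s(b))$ for all $e,b$ yields block-constancy of $Y_s$; these are all routine in this framework, and you should simply drop the unnecessary (and not obviously true) claim that your components admit no further non-trivial non-disturbing measurement, since only block-constancy of $Y_s$ is needed.
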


\begin{proof}
See Appendix A
\end{proof}

It is important to note that, although we make reference to non-disturbing measurements, this work does not require the commonly made assumption of the \emph{no-restriction hypothesis}, which asserts that any mathematically valid measurement can be physically realized. Moreover, we also are not using the frequently used assumption of \emph{local tomography}, which asserts that composite states can be completely characterized by their local measurement statistics. In summary, we find that if we assume the following postulates: 
\begin{itemize}
\item[1.] \emph{non-classicality}: the state space does not contain a classical degree of freedom;
\item[2.] \emph{no entanglement}: systems are composed under the minimal tensor product $\boxtimes$;
\item[3.] \emph{reversible interactions}: systems can become correlated through locally reversible interactions;
\end{itemize}
we reach a contradiction. By making use of a broad notion of classicality, we can arrive at two different sets of physical postulates that predict the existence of entangled states. Firstly, there is \emph{transitivity} (Postulate 1). Transitivity excludes the existence of classical degrees of freedom from non-classical systems, as for transitive state spaces these always manifest as classical ancillas. Therefore by Theorem 3 all reversible interactions between transitive systems are between a system and a classical system, and there are no reversibly interactions between non-classical systems. Secondly, \emph{no-cloning / broadcasting} (Postulate 2) immediately precludes the existence of classical degrees of freedom, and therefore any systems obeying a no-cloning / broadcasting postulate cannot interact reversibly with non-entangling transformations.

\section{Discussion}

In this article we have show that for systems to reversibly interact the interaction must generate entanglement or be an interaction of classical degrees of freedom. Furthermore, we have presented a set of physical postulates that are sufficient to imply the existence of entangled states by excluding classical degrees of freedom -- namely that the systems can contain no cloneable / broadcastable information or that the systems are transitive. To emphasis the power of this result, consider a physicist with no knowledge of quantum theory who observes systems with some non-classical properties such as exhibiting interference in a Young's double slit experiment. If the physicist observed that these systems can reversibly interact and their local state spaces can be reversibly explored, then he or she would be able to deduce the existence of entangled states. This implies that entanglement, far from being a manifestly quantum phenomena, is an inevitable feature of any reasonable non-classical theory of nature. 

The existence of reversible interactions in a key feature of both classical and quantum theory. For instance the second law of thermodynamics is contingent on the assumption of the reversibility of interactions \cite{bennett2003notes,alhambra2016second,richens2016quantum}. Previous work has shown reversibility to be an important property in determining the physical limitations of a theory. For example it is required to ensure the impossibility of deleting \cite{pati2000impossibility} or cloning an unknown quantum state \cite{wootters1982single} (and a weaker classical version \cite{daffertshofer2002classical}), which together establish information as something analogous to a conserved quantity.  

In classical and quantum theory not only do reversible interactions exist, but the theories are transitive, which means that you can reversibly explore all of the state space {(on both a local and global level)}. Indeed any theory whereby the local state space is defined by the action of some dynamical group (be it the permutation group in classical theory of the unitary group in quantum theory) obeys transitivity by construction. Therefore one could argue that transitivity should really be taken as part of the definition of a state space in theories where, fundamentally, all dynamics are reversible. For example if some region of the state space is not reachable under any of the local or multipartite dynamics then it should not be considered as part of the state space. If we use this definition of the state space, the only systems with classical degrees of freedom are classical simplices, and our main result becomes that non-classical systems cannot reversibly interact without entanglement.  

It would be of interest to determine if, perhaps by introducing additional physical postulates, it would be possible to derive the existence of non-local correlations that violate Bell inequalities. The existence of entangled states is in general a necessary but insufficient condition for observing violations of Bell inequalities. For example, entangled states are present in the local theory of Spekken's toy model \cite{spekkens2007evidence}. However, it has been shown that all entangled states in quantum theory display some hidden non-locality \cite{buscemi2012all,masanes2008all}. By determining the additional structure present in quantum theory that gives this correspondence between entanglement and non-locality, it could be possible to derive the violation of Bell inequalities from purely physical postulates. Given the surprising simplicity of the postulates presented in this paper that result in entanglement, it is plausible that the physical postulates that give rise to Bell non-locality are similarly mundane.

\noindent \textbf{Acknowledgments:} The authors would like to thank Lluis Masanes and Ciar\'{a}n Lee for useful discussion. JR and JHS are supported by EPSRC through the Controlled Quantum Dynamics Centre for Doctoral Training.
\bibliographystyle{apsrev4-1}
\bibliography{bibliography}

\section{Appendices}
\subsection{Mathematical background}

As mentioned above, we associate each system with a state space which is defined as 
a compact, convex subset of a real, finite-dimensional vector space. That is to say, a closed and bounded set of vectors in a real, finite-dimensional vector space such that if $v_1$ and $v_2$ are inside the set then for $p\in[0,1]$, $pv_1+(1-p)v_2$ is also in the set.

Given two such state spaces $\mathcal{A}$ and $\mathcal{B}$ there are two constructions of composite state spaces which are important for the derivation of our results. 

\begin{definition}Minimal tensor product, $\mathcal{A}\boxtimes \mathcal{B}$:
\[\mathcal{A}\boxtimes \mathcal{B}:=\text{Conv}\left(\left\{a\otimes b \middle| a\in \mathcal{A}, b\in\mathcal{B} \right\} \right)\]
\end{definition}
\begin{definition}Direct sum, $\mathcal{A}\oplus \mathcal{B}$:
\[\mathcal{A}\oplus \mathcal{B}:=\text{Conv}\left(\left\{a\oplus {\bf 0}, {\bf 0}\oplus b \middle| a\in \mathcal{A}, b\in\mathcal{B} \right\} \right)\]
\end{definition}

An important result is the following distributive law for the direct sum and minimal tensor product.

\begin{lemma}
The direct sum $\oplus$ distributes over the minimal tensor product $\boxtimes$
\end{lemma}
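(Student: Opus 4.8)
The plan is to prove the distributive law $(\mathcal{A}\oplus\mathcal{B})\boxtimes\mathcal{C} = (\mathcal{A}\boxtimes\mathcal{C})\oplus(\mathcal{B}\boxtimes\mathcal{C})$ by showing that both sides are convex hulls of the same set of product-like vectors, exploiting the fact that $\boxtimes$ is by definition a convex hull of product states and $\oplus$ is a convex hull of ``embedded'' states, together with the bilinearity of $\otimes$ over direct sums of the underlying vector spaces.

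First I would fix notation: if $\mathcal{A}\subseteq V_A$, $\mathcal{B}\subseteq V_B$, $\mathcal{C}\subseteq V_C$, then $\mathcal{A}\oplus\mathcal{B}\subseteq V_A\oplus V_B$ and the relevant tensor products live in $(V_A\oplus V_B)\otimes V_C \cong (V_A\otimes V_C)\oplus(V_B\otimes V_C)$, using the natural vector-space isomorphism that sends $(a\oplus b)\otimes c$ to $(a\otimes c)\oplus(b\otimes c)$. Under this identification, a pure product generator of the left-hand side has the form $x\otimes c$ where $x$ is a pure state of $\mathcal{A}\oplus\mathcal{B}$, hence (since pure states of a direct sum are pure states of one summand embedded via $a\mapsto a\oplus\mathbf{0}$ or $b\mapsto\mathbf{0}\oplus b$) $x\otimes c$ equals either $(a\otimes c)\oplus\mathbf{0}$ or $\mathbf{0}\oplus(b\otimes c)$ for pure $a\in\mathcal{A}$, $b\in\mathcal{B}$, $c\in\mathcal{C}$. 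These are exactly the generators of $(\mathcal{A}\boxtimes\mathcal{C})\oplus(\mathcal{B}\boxtimes\mathcal{C})$ when one further observes that $a\otimes c$ (resp. $b\otimes c$) ranges over a spanning set of extreme points of $\mathcal{A}\boxtimes\mathcal{C}$ (resp. $\mathcal{B}\boxtimes\mathcal{C}$). Taking convex hulls of equal generating sets gives equality of the two composite state spaces; I would present this as a short two-inclusion argument, each inclusion following by writing an arbitrary element as a convex combination of generators and pushing the combination through the (linear) isomorphism.

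The one subtlety I would be careful about — and what I expect to be the main obstacle — is the interaction between ``pure states of $\mathcal{A}\oplus\mathcal{B}$'' and ``pure product states'': one must check that it suffices to take $x$ pure (extremal) in the generators of the minimal tensor product, and that the extreme points of $\mathcal{A}\oplus\mathcal{B}$ are precisely the embedded extreme points of $\mathcal{A}$ and of $\mathcal{B}$ (with no new extreme points created ``between'' the summands). This is a standard convex-geometry fact about the direct sum construction as defined in the excerpt — since $\mathbf{0}$ need not be a point of $\mathcal{A}$ or $\mathcal{B}$, a little care is needed, but a segment joining $a\oplus\mathbf{0}$ and $\mathbf{0}\oplus b$ has all its relative interior points non-extremal, so no generators are lost. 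I would also need to note that convexity plus bilinearity of $\otimes$ means a convex combination of generators on one side maps to a convex combination of generators on the other, so no normalization issues arise. Once these points are dispatched, equality of the convex hulls is immediate, completing the proof.
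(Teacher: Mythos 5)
Your proof is correct and takes essentially the same approach as the paper's, which simply notes that the identity follows from writing out both sides via the definitions of $\boxtimes$ and $\oplus$, using bilinearity of $\otimes$ over the direct sum and the fact that $\mathbf{0}\otimes c=\mathbf{0}$. The detour through extreme points of $\mathcal{A}\oplus\mathcal{B}$ is not actually needed --- the definitions present both sides as convex hulls of generating sets containing \emph{all} embedded/product states, not just pure ones, so one can push an arbitrary convex combination of generators through the linear isomorphism directly --- but it does no harm.
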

\begin{proof}
Consider three state spaces $\mathcal{A}$, $\mathcal B$ and $\mathcal C$. We want to show that,
\[\mathcal A\boxtimes (B\oplus C) = (A\boxtimes B)\oplus(A\boxtimes C)\]
This follows immediately from writing out both sides using the definitions of $\boxtimes$ and $\oplus$, using distributivity of the direct sum and tensor product on the individual vectors, and noting that ${\bf 0} \otimes s = {\bf 0}\otimes {\bf 0}$.
\end{proof}

Another important structure associated to the state space is the face lattice.
We say that a set $f$ is a \emph{face} of the state space $\mathcal{S}$ if $f$ is itself a convex set such that for any state $s\in f$, $v_i\in \mathcal{S}$ and $p\in[0,1]$ that,
 \[s=p v_1 + (1-p) v_2 \implies v_1,v_2 \in f\]
We denote by $\mathsf{Face}(\mathcal{S})$ the set of faces of $\mathcal{S}$. A vertex $v$ of $\mathcal{S}$ is a zero-dimensional face of $\mathcal{S}$, and we denote by $\mathsf{Vertex}(\mathcal{S})$ the set of vertices of $\mathcal{S}$. Faces of a convex set can be ordered by subset inclusion, which gives them the structure of a lattice. The join $f_1 \lor f_2$ of two faces in this lattice is particularly important for this work and can be defined as the minimal face that contains both $f_1$ and $f_2$.

It will be useful to consider how reversible transformations of the convex set induce a corresponding transformation of the face lattice. Specifically it induces a lattice automorphism, but the following three lemmas will be of particular use for us.

\begin{lemma}\label{lem:LatMorph2} $T(f)=f'\cong f$ i.e. faces are mapped to isomorphic faces. \end{lemma}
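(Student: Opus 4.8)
The plan is to unpack what a reversible transformation $T$ does to the state space and then trace its action through to the face lattice. First I would recall that $T$ is by definition a linear bijection on the ambient vector space that restricts to a bijection of the convex set $\mathcal{S}$ onto itself (or, in the more general statement, onto another state space $\mathcal{S}'$), with linear inverse $T^{-1}$. Linearity is the crucial ingredient: a linear bijection preserves all convex-combination relations, i.e.\ $s = p v_1 + (1-p) v_2$ holds if and only if $T(s) = p\, T(v_1) + (1-p)\, T(v_2)$. From this equivalence it is immediate that $T$ sends faces to faces: if $f$ is a face of $\mathcal{S}$, I would take $s' \in T(f)$, write $s' = p v_1 + (1-p) v_2$ with $v_i \in \mathcal{S}$, apply $T^{-1}$ to get $T^{-1}(s') = p\, T^{-1}(v_1) + (1-p)\, T^{-1}(v_2)$ with $T^{-1}(s') \in f$, invoke the face property of $f$ to conclude $T^{-1}(v_i) \in f$, and hence $v_i \in T(f)$. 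So $T(f) =: f'$ is a face of $\mathcal{S}'$.

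The second half is the isomorphism claim $f' \cong f$. Here I would interpret ``$\cong$'' as: $T$ restricted to $f$ is an affine (indeed linear) isomorphism onto $f'$, so $f$ and $f'$ are the same convex body up to an invertible linear map, hence in particular order-isomorphic as sublattices of $\mathsf{Face}(\mathcal{S})$ and $\mathsf{Face}(\mathcal{S}')$ respectively. This follows because $T|_f : f \to f'$ is a bijection (it is injective as the restriction of an injective map, and surjective onto $f'$ by the definition $f' = T(f)$) with inverse $T^{-1}|_{f'}$, and both are restrictions of linear maps. Since sub-faces of $f$ are exactly the faces of $\mathcal{S}$ contained in $f$, and $T$ is an order-preserving bijection on the whole face lattice, it carries $\mathsf{Face}(f)$ bijectively and order-isomorphically onto $\mathsf{Face}(f')$.

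I expect there is no serious obstacle here — the entire content is that linear bijections preserve the convex-geometric structure that defines faces. The one point requiring a little care is pinning down precisely what notion of isomorphism is meant by ``$\cong$'' and making sure the argument delivers exactly that notion; I would state explicitly that we mean isomorphism as convex sets (equivalently, as abstract state spaces via the affine/linear map $T|_f$), which is the strongest reasonable reading and which then automatically yields the face-lattice isomorphism used later. A second minor subtlety is that one should note the argument applies verbatim when $T$ maps $\mathcal{S}$ to a possibly different state space $\mathcal{S}'$ (as in the ambient framework where reversible transformations between distinct systems are allowed), not only in the endomorphism case; the proof is identical since only bijectivity and linearity of $T$ and $T^{-1}$ are used.
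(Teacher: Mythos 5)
Your proof is correct and follows essentially the same route as the paper: pull back a convex decomposition of a point of $T(f)$ through the linear inverse $T^{-1}$, invoke the face property of $f$, and push forward again, with $T$ itself furnishing the isomorphism. Your extra care in spelling out what ``$\cong$'' means (affine isomorphism of convex sets, hence order-isomorphism of face lattices) is a welcome clarification but not a different argument.
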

\proof Consider some $s'\in f'$ then $f'$ is a face if for any decomposition, $s'=p s'_1 +(1-p)s'_2$, $s'_i$ are also in the set $f'$. Note that $s'=T(s)$ for some $s\in f$ and as $T$ is reversible, this means that $s=T^{-1}(s')=T^{-1}(p s'_1 +(1-p)s'_2)$ which by linearity of $T^{-1}$ implies that $s=pT^{-1}(s'_1)+(1-p)T^{-1}(s'_2)$. This provides a decomposition of $s$ and as $f$ is a face this means that $T^{-1}(s'_i)\in f$ and so $s'_i$ are both in $f'$. Therefore $f'$ is a face. It is clearly isomorphic to $f$ as $T$ provides the isomorphism. \endproof

\begin{lemma}\label{lem:LatAutoMorph}A reversible transformation $T$ on $\mathcal{S}$ induces an automorphism of the face lattice $\mathsf{Face}(\mathcal{S})$.
\end{lemma}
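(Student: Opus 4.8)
The plan is to build on Lemma~\ref{lem:LatMorph2}, which already tells us that a reversible $T$ maps each face $f$ to a face $T(f)$ isomorphic to $f$. What remains is to check that this assignment $f \mapsto T(f)$ is order-preserving and invertible as a map on the poset $\mathsf{Face}(\mathcal{S})$, so that it is a genuine lattice automorphism. First I would observe that $T$, being a linear bijection of the vector space that preserves $\mathcal{S}$, takes convex subsets to convex subsets and preserves subset inclusion: if $f_1 \subseteq f_2$ then $T(f_1) \subseteq T(f_2)$. Combined with Lemma~\ref{lem:LatMorph2}, which guarantees $T(f_1)$ and $T(f_2)$ are again faces, this shows $f \mapsto T(f)$ is an order-preserving map of $\mathsf{Face}(\mathcal{S})$ into itself.

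Next I would apply the identical argument to $T^{-1}$, which is also a reversible transformation, to obtain an order-preserving map $f \mapsto T^{-1}(f)$ of $\mathsf{Face}(\mathcal{S})$ into itself. Since $T^{-1} \circ T = \mathds{1} = T \circ T^{-1}$ as maps on the vector space, the induced maps on $\mathsf{Face}(\mathcal{S})$ are mutually inverse. A bijection of a poset that is order-preserving in both directions is precisely a poset (hence lattice) automorphism; in particular it commutes with meets and joins, so $T(f_1 \lor f_2) = T(f_1) \lor T(f_2)$ and similarly for $\land$. This is all that is needed.

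The only mild subtlety — and the step I would be most careful about — is confirming that $T(f)$ is nonempty and lands in $\mathsf{Face}(\mathcal{S})$ for \emph{every} $f$, including the bottom element $\emptyset$ and the top element $\mathcal{S}$ itself; the bottom is handled trivially since $T(\emptyset) = \emptyset$, and the top since $T(\mathcal{S}) = \mathcal{S}$ by assumption that $T$ preserves the state space, while all other faces are covered by Lemma~\ref{lem:LatMorph2}. Beyond that there is essentially no obstacle: the content is entirely in the already-established facts that $T$ is a linear bijection preserving $\mathcal{S}$ and that it maps faces to faces, and the lemma is a short formal consequence.
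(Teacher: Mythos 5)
Your proposal is correct and follows essentially the same route as the paper: invoke Lemma~\ref{lem:LatMorph2} to get that faces map to faces, then use reversibility to conclude the induced map is a bijection of $\mathsf{Face}(\mathcal{S})$. You are in fact slightly more careful than the paper's one-line argument, since you explicitly verify order-preservation in both directions (via $T$ and $T^{-1}$), which is the step actually needed to upgrade a bijection on faces to a lattice automorphism.
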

\proof Lemma \ref{lem:LatMorph2} shows that faces are mapped to faces, then reversibility of $T$ implies that this mapping of faces must be 1 to 1 and hence induces a lattice automorphism.\endproof
An immediate corollary of this is that:
\begin{corollary}\label{lem:LatMorph1}
$T(s_1\lor s_2) =T(s_1)\lor T(s_2)$
\end{corollary}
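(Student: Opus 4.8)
The statement to prove is Corollary~\ref{lem:LatMorph1}: $T(s_1 \lor s_2) = T(s_1) \lor T(s_2)$.

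\medskip

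The plan is to derive this directly from Lemma~\ref{lem:LatAutoMorph}, which tells us that a reversible transformation $T$ induces an automorphism of the face lattice $\mathsf{Face}(\mathcal{S})$. The key observation is that the join $f_1 \lor f_2$ was \emph{defined} as the minimal face containing both $f_1$ and $f_2$, so it is an order-theoretic (purely lattice-theoretic) notion, and any lattice automorphism must preserve it. Concretely, first I would note that since $T$ induces a lattice automorphism, it is in particular order-preserving: if $g \leq h$ in the face lattice then $T(g) \leq T(h)$, because $g \subseteq h$ implies $T(g) \subseteq T(h)$ by applying $T$ pointwise, and $T(g), T(h)$ are genuine faces by Lemma~\ref{lem:LatMorph2}.

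\medskip

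Next I would unfold the definition of the join on the left-hand side. By definition $s_1 \lor s_2$ is the smallest face containing both $s_1$ and $s_2$; applying $T$, the set $T(s_1 \lor s_2)$ is a face (Lemma~\ref{lem:LatMorph2}) that contains $T(s_1)$ and $T(s_2)$, hence $T(s_1) \lor T(s_2) \leq T(s_1 \lor s_2)$ by minimality of the join on the right-hand side. For the reverse inclusion I would apply the same argument to $T^{-1}$, which is also reversible and hence also induces a lattice automorphism: $T^{-1}(T(s_1) \lor T(s_2))$ is a face containing $s_1$ and $s_2$, so $s_1 \lor s_2 \leq T^{-1}(T(s_1) \lor T(s_2))$, and applying the order-preserving map $T$ to both sides gives $T(s_1 \lor s_2) \leq T(s_1) \lor T(s_2)$. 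Combining the two inequalities yields equality.

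\medskip

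I do not expect any serious obstacle here — the result is genuinely a corollary and the only thing to be careful about is keeping straight which direction of containment follows from minimality of which join, and invoking $T^{-1}$ (rather than trying to argue surjectivity of $T$ on faces some other way) to get the second inclusion cleanly. One could alternatively cite the standard fact that any lattice isomorphism commutes with all finite joins and meets, but spelling out the two-line double-inclusion argument via $T$ and $T^{-1}$ is more self-contained given what has been set up.
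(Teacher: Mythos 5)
Your argument is correct and is precisely the reasoning the paper leaves implicit: it states the result without proof as an ``immediate corollary'' of Lemma~\ref{lem:LatAutoMorph}, the point being exactly that a lattice automorphism preserves the order-theoretically defined join. Your double-inclusion argument via $T$ and $T^{-1}$ is a sound and complete way of spelling that out.
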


\section{Appendix A}

\section{Transitive state spaces with classical degree of freedom}

\noindent \emph{Theorem 1:} Transitive state spaces with a classical degree of freedom have a classical subsystem, 
\[\mathcal{S}=\mathcal{A}\oplus \mathcal{B} \implies \exists N\geq 2\ \&\ \mathcal{C} \text{ such that } \mathcal{S}=\Delta_N\boxtimes \mathcal{C}\]

\proof
First note that one can always decompose a state space, $\mathcal{S}$ into irreducible components, $A_i$, as
\[\mathcal{S}=\bigoplus_i A_i\]
where irreducibility of $A_i$ implies that they cannot be further decomposed with respect to $\oplus$.

Next note that the faces of $\mathcal{S}$ are all of the form,
\[\bigoplus_i f_i\] where $f_i\in \mathsf{Face}(A_i)$ \cite{barker1978perfect}. Therefore the only irreducible faces are those of the form,
\[{\bf 0} \oplus\dots\oplus {\bf 0}\oplus f_i\oplus {\bf 0}\oplus \dots \oplus {\bf 0}\]
where $f_i$ is an irreducible face of $A_i$. 

As vertices are trivially irreducible, each vertex of $\mathcal{S}$ corresponds to one of the vertices of one of the $A_i$. Therefore, consider two vertices $v_i$, belonging to distinct $A_i$. As the $A_i$ are each irreducible (by assumption), then the two vertices of $\mathcal{S}$ corresponding to these $v_i$ belong to a maximal irreducible face isomorphic to their respective $A_i$.

Lemma \ref{lem:LatAutoMorph} implies that, reversible transformations preserve the set of faces that a vertex belongs to. Therefore the maximal irreducible face for each must be isomorphic, and hence, $A_i\cong A_j\ \ \forall i,j$.

 Therefore $\mathcal{S}=\bigoplus_{i=1}^n A$ for $A\cong A_j$. Now note that the simplex state space $\Delta_m=\bigoplus_{i=1}^{m+1}p$ where $p$ is a point state space. Additionally note that $B\boxtimes p\cong B$ for any state space $B$. Therefore, 
\[\mathcal{S}=\bigoplus_{i=1}^n A = \bigoplus_{i=1}^n A\boxtimes p = A\boxtimes \bigoplus_{i=1}^n p= A\boxtimes \Delta_{n-1}\]
where the third equality uses the distributivity of $\boxtimes$ over $\oplus$. 

\endproof
\begin{theorem}
Interactions between systems without entanglement are conditional transformations on those classical degrees of freedom

\end{theorem}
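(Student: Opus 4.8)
The plan is to combine the structural decomposition from Theorem 1's proof technique with the trivialization argument from Theorem 2. The key insight is that when we drop the non-classicality assumption, a system $\mathcal S$ with a classical degree of freedom decomposes as $\mathcal S = \bigoplus_i A_i$, and the reversible interaction $T$, rather than being globally trivial, acts in a way that is controlled by which component $A_i$ the state lies in. So I would first show that any locally reversible interaction $T$ on $\mathcal S_1 \boxtimes \mathcal S_2$ induces, via the partial broadcaster construction of Lemmas 1--2, a non-disturbing measurement whose associated function $f$ is constant on each face coming from a single irreducible component $A_i$ of the relevant system. This is because the non-disturbing measurement $M_e$ built from $B$ is exactly the measurement that reads off the classical label (the value of $i$), and by Lemma~\ref{barnum lemma} together with the analysis of $f$ in the proof of Theorem~2, the transformations $X_s$ and $Y_s$ depend on $s$ only through this label.

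Concretely, I would write the first system as $\mathcal S_1 = \bigoplus_i A_i$ (and similarly $\mathcal S_2 = \bigoplus_j C_j$), so that by distributivity of $\boxtimes$ over $\oplus$ (the distributive lemma in the Mathematical Background appendix) the composite becomes $\bigoplus_{i,j}(A_i \boxtimes C_j)$. The partial broadcaster $B_b$ defined from $T$ yields a function $f$ from pure states of the first system to pure states of the second; the argument in the proof of Theorem~2 shows $f(s)$ determines $X_s$, and non-triviality of the decomposition forces $f$ to be constant on each $A_i$ but allows it to vary across the label $i$. Hence $X_s = X_i$ and $Y_s = Y_j$ depend only on the classical labels. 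Substituting back into the defining equation of the locally reversible interaction (Definition~\ref{def:LRI}), $T$ restricted to $A_i \boxtimes C_j$ factorizes as $X_i^{-1} \circ (\text{something}) $, i.e. $T = \sum_{i,j} (\text{label-}i\text{ projector}) \otimes (\text{label-}j\text{ projector}) \cdot (X_i \otimes Y_j)$ in the appropriate sense, which is precisely a transformation conditioned on the classical degrees of freedom. I would then note, using Theorem~1's decomposition machinery and Lemma~\ref{lem:LatAutoMorph}, that the $A_i$ are all isomorphic and the action is a well-defined conditional (controlled) reversible transformation.

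The main obstacle I anticipate is making precise and rigorous the claim that $f$ being "constant on each irreducible component" is equivalent to $T$ being "conditional on the classical degree of freedom" — in particular, handling the interplay between the two systems' classical labels simultaneously, since the partial broadcaster $B_b$ has a state $b$ of the second system hard-coded into it, and one must check that varying $b$ over different components $C_j$ produces a consistent family of conditional actions rather than something that mixes the labels. A second delicate point is that the face-lattice automorphism argument from Theorem~1 was stated for \emph{transitive} spaces, whereas here transitivity is not assumed; I would need to verify that the purely convex-geometric part (decomposition into irreducibles, faces of a direct sum being direct sums of faces, via \cite{barker1978perfect}) goes through without transitivity, using transitivity only if needed to conclude the $A_i$ are mutually isomorphic — and if that step fails without transitivity, the conclusion should be weakened to "$T$ is a conditional transformation, with possibly different reversible maps conditioned on each value of the classical label," which is still the statement of the theorem as phrased.
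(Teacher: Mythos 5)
Your proposal is correct and follows essentially the same route as the paper: decompose each system into indecomposable summands, distribute $\boxtimes$ over $\oplus$ to write the composite as $\bigoplus_{i,j}(A_i\boxtimes C_j)$, and apply the trivialization of locally reversible interactions on each indecomposable block to conclude that $T$ acts as $X_i\otimes Y_j$ conditioned on the labels $(i,j)$. The only difference is presentational --- the paper cites Theorem~2 as a black box on each block rather than re-running the partial-broadcaster argument --- and, as you suspected, the paper's proof uses neither transitivity nor mutual isomorphism of the $A_i$, so your weakened fallback conclusion is exactly what is proved.
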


\begin{proof}

Consider the state space $\mathcal A \boxtimes \mathcal B$ where $\mathcal A = \bigoplus\limits_{i=1}^{n_A} \mathcal A_i$ where $\mathcal A_i$ are indecomposable and similarly $\mathcal B = \bigoplus\limits_{j=1}^{n_B} \mathcal B_j$ where all $\mathcal B_j$ are indecomposable (note any decomposable finite dimensional cone can be decomposed in the way \cite{barker1978perfect}). 
Consider a reversible transformation $T$ on this state space (as defined in Definition 4), 
\[T(\mathcal A\boxtimes \mathcal{B})=T(\bigoplus\limits_{i=1}^{n_A} \mathcal A_i\boxtimes \bigoplus\limits_{j=1}^{n_B} \mathcal B_j)=T(\bigoplus\limits_{i=1,j=1}^{n_A,n_B}\mathcal A_i \boxtimes \mathcal B_j)\]
the second equality uses distributivity of $\oplus$ over $\boxtimes$. As all $\mathcal A_i\boxtimes \mathcal B_j$ are indecomposable, they contain no classical degrees of freedom and therefore all locally reversible operations on them are trivial. Therefore Theorem 2 implies that 
\[T(\mathcal A_i\boxtimes \mathcal B_j)=X_i\otimes Y_j (\mathcal A_i \boxtimes \mathcal B_j) \]
and therefore 
\[ T(\mathcal A \boxtimes \mathcal B) = \bigoplus\limits_{i=1,j=1}^{n_A,n_B}T_i\otimes T_j(\mathcal A_i \boxtimes \mathcal B_j) \]
where $(i,j)$ index the classical degrees of freedom of $\mathcal A \boxtimes \mathcal B$. Therefore we see that, as $T$ factorizes over $\oplus$, the interaction is conditional on the classical degrees of freedom $(i,j)$ alone. 
\end{proof}

\end{document}